\newtheorem{theorem}{Theorem}[section]
\newtheorem{lemma}[theorem]{Lemma}
\newcommand{\R}{\mathbb{R}}
\newcommand{\x}{\mathbf{x}}
\newcommand{\y}{\mathbf{y}}
\newcommand{\dfdx}{\nabla f(\x)}
\newcommand{\dfdy}{\nabla f(\y)}
\theoremstyle{plain}
\theoremstyle{definition}
\newtheorem{definition}[theorem]{Definition}
\theoremstyle{remark}
\DeclareMathOperator*{\argmax}
\title{An Equivalent Circuit Approach to Distributed Optimization}
\author{Aayushya Agarwal, Larry Pileggi}
\date{April 2023}
\begin{document}

\maketitle

\section{Abstract}
\label{sec:abstract}

Distributed optimization is an essential paradigm to solve large-scale optimization problems in modern applications where big-data and high-dimensionality creates a computational bottleneck. Distributed optimization algorithms that exhibit fast convergence allow us to fully utilize computing resources and effectively scale to larger optimization problems in a myriad of areas ranging from machine learning to power systems. In this work, we introduce a new centralized distributed optimization algorithm (ECADO) inspired by an equivalent circuit model of the distributed problem. The equivalent circuit (EC) model provides a physical analogy to derive new insights to develop a fast-convergent algorithm. The main contributions of this approach are: 1) a weighting scheme based on a circuit-inspired aggregate sensitivity analysis, and 2) an adaptive step-sizing derived from a stable, Backward-Euler numerical integration. We demonstrate that ECADO exhibits faster convergence compared to state-of-the art distributed optimization methods and provably converges for nonconvex problems. We leverage the ECADO features to solve convex and nonconvex optimization problems with large datasets such as: distributing data for logistic regression, training a deep neural network model for classification, and solving a high-dimensional problem security-constrained optimal power flow problem. Compared to state-of-the-art centralized methods, including ADMM, centralized gradient descent, and DANE, this new ECADO approach is shown to converge in fewer iterations.

\section{Introduction}
Modern optimization problems are often computationally constrained by high-dimensionality and large datasets. Distributed optimization is an efficient paradigm to solve large-scale problems using all available resources by separating the optimization problem into a series of sub-problems, each of which are solved on separate computing nodes. Big data applications, such as those in machine learning, benefit from distributing  datasets across available computer resources to parallelize solving the underlying optimization problem. Similarly, distributed optimization aids high-dimensional optimization problems, such as security-constrained optimal power flow in power systems, by separating the optimization problem across multiple cores and collectively converging toward a minimum. Efficient methods to solve the distributed optimization improve efficiency and enable scaling to larger problems. 

Similar to distributed optimization, distributed simulation has been widely used to simulate large physical systems. In particular, circuit simulation uses distributed computing to simulate the behavior of large circuits approaching trillions of nodes by developing efficient distributed techniques that exploit the physical structure of the underlying circuit models \cite{white2012relaxation}. Importantly, the physical model provides a basis for deriving new distributed simulation methods. In contrast, state-of-the-art distributed optimization methods omit any domain-specific knowledge from the optimization problem for generality, but at the potential cost of efficiency. In general, distributed optimization methods tend not to utilize any structural information or domain behavior.

There is an opportunity, however, to derive insights from a physical model that can be mapped to a generalized distributed optimization one. Specifically, we map a general distributed optimization problem into a corresponding equivalent circuit (EC) model using the framework in \cite{agarwal2023equivalent}. When casted as an ODE problem, the trajectory of the optimization variables is represented by the transient response of node voltages and a steady-state solution coincides with a critical point of the objective function. We demonstrate that this physical model will facilitate the construction of a fast converging centralized distributed optimization algorithm that would otherwise be non-intuitive. 

We start by reconstructing the distributed optimization problem in terms of circuit principles to formulate a problem that is suited for distributed computing. Then we solve the corresponding EC model to steady-state (i.e., the critical point) using a distributed Gauss-Seidel (G-S) process, where each sub-problem is modeled as a partitioned sub-circuit. Importantly, we utilize circuit simulation methods to guarantee convergence of G-S and to design a novel centralized consensus algorithm. The main contributions of this consensus algorithm are 1) a weighting scheme based on an aggregate sensitivity model of each sub-problem and 2) an adaptive stable step-size selection using a Backward-Euler numerical integration scheme that allows larger step-sizes. The adaptive step-size selection routine assures numerical accuracy and guarantees Gauss-Seidel convergence. Importantly, these contributions are derived via the use of a circuit model and would not be intuitively realized without this physical model and insight.

The result is a new centralized distributed optimization algorithm, called ECADO, with a weighting scheme and adaptive, stable step-size selection method that provides fast convergence of distributed optimization problems. We compare the convergence rate of ECADO with state of the art distributed optimization methods, including DANE, centralized gradient-descent and, ADMM using a Gauss-Seidel approach. We prove the guaranteed convergence for nonconvex optimization problems and experimentally demonstrate superior convergence rate as 
compared to these existing methods.

Specifically, these comparisons are made for three possible domains: logistic regression, training distributed neural network, and solving a security constrained optimal power flow problem. For all cases and problem domains that we attempted, ECADO demonstrates faster convergence compared to  ADMM, DANE, and centralized gradient descent.

\section{Related Works}
Distributed optimization has received significant attention from optimization and control communities to leverage distributed computing resources in order to solve large-scale optimization problems. Existing distributed optimization methods have traditionally approached the problem using discrete iterative algorithms. More recently, there has been attention towards a continuous time formulation of distributed optimization to capitalize on ideas from dynamical systems for faster convergence.

\subsection{Discrete Time Algorithms}
Distributed optimization algorithms have been primarily derived in a discrete-time setting, where convergence is studied in terms pf discrete iterations. First-order optimization methods use the first-order derivatives of the objective function and have been extensively studied due to their scalability and ease of implementation. The consensus gradient-descent algorithm \cite{centralized_gradient_descent}, which performs a consensus step in the central agent that averages all local updates from each sub-problem, is the most basic first-order distributed algorithm; however, selecting an appropriate step size often requires a trade-off between speed and accuracy, such that diminishing step sizes are often employed in practice \cite{shi2015extra}.   While a complete survey of distributed optimization papers is outside the scope of this paper (refer to \cite{yang2019survey}), we highlight certain discrete-time algorithms that have demonstrated fast convergence.Several advancements have been made to improve distributed optimization methods beyond first-order techniques. For instance, EXTRA \cite{shi2015extra} utilizes past gradient information to allow larger step-sizes to accelerate convergence. In contrast, \cite{di2016next} employs a series of local convex approximations to solve nonconvex distributed optimization problems.
In the case of the centralized gradient descent (CGD) method, researchers have incorporated gradient tracking algorithms \cite{nedic2017achieving,qu2017harnessing,scutari2019distributed,xin2019distributed} to improve its linear convergence.

ADMM is a widely-used tool for solving convex optimization functions, with numerous variants developed to improve convergence rates \cite{admm_boyd2011distributed,he2016convergence,franca2018admm}. The convergence rate of ADMM has been extensively studied \cite{xu2017adaptive,wang2019global}, and it has found widespread applications in domains such as power systems optimization \cite{erseghe2014distributed,mhanna2018adaptive,yang2019survey,wang2016fully}. Despite its popularity, ADMM is highly sensitive to parameter selection \cite{nishihara2015general,ghadimi2014optimal} and can suffer from slow convergence in nonconvex settings \cite{yuejie_li2020communication}. Additionally, dual-descent methods such as ADMM \cite{admm_boyd2011distributed} and \cite{scaman2017optimal,uribe2017optimal,wai2018multi} require knowledge of the dual formulation, which may be harder to obtain. 

Second-order (Newton-like) methods have been employed to enhance the convergence rate by utilizing the Hessian, or an approximate Hessian, to update the state variables for a distributed optimization. The Newton-like techniques \cite{mokhtari2016network,tutunov2019distributed} attain consensus updates using the Hessian of the local objective function, which exhibit linear convergence in a small vicinity of the optimal solution. However, these methods necessitate computationally expensive evaluations of the full Hessian. Other works have avoided calculating the full Hessian, \cite{mokhtari2015approximate,dane_shamir2014communication,yuejie_li2020communication,bajovic2017newton}. Variants of \cite{dane_shamir2014communication} have been developed to solve network-aware \cite{yuejie_li2020communication}.

\subsection*{Continuous Time Algorithms}
 To gain insights into the design and convergence of optimization methods, considerable effort has been devoted to analyzing optimization trajectories using the continuous-time formulation known as gradient-flow \cite{behrman1998efficient,attouch1996dynamical,agarwal2023equivalent}. The gradient-flow approach represents optimization trajectories in a state-space, enabling the application of ideas from control systems, such as Lyapunov theory \cite{wilson2021lyapunov,polyak2017lyapunov} and feedback controls \cite{muehlebach19a}, to solve optimization problems. This approach has been extended to solve distributed optimization problems, where continuous-time algorithms have been proposed to study convergence \cite{9253684,swenson2021distributed} and design \cite{9253684,6578120,kia2015distributed,5706956,7554656} new algorithms.

Lyapunov analysis, a fundamental design principle in control systems, has been applied to the dynamical system model of distributed optimization. Several existing works have leveraged Lyapunov theory to study the convergence of both new and existing distributed optimization methods \cite{sakurama_distributed,swenson2021distributed,kvaternik,Liu_second_order}. For instance, \cite{sakurama_distributed} proposes a Lyapunov stability criterion to provide general guidelines for designing feedback controllers for a distributed gradient-flow. \cite{swenson2021distributed} studies the convergence of the distributed gradient-flow using Lyapunov stability and extends the analysis to include nonsmoothness, nonconvexity, and saddle points. Meanwhile, \cite{kvaternik} demonstrates the asymptotic stability of a strongly convex objective function distributed amongst multiple agents using a Lyapunov analysis. In another example, \cite{Liu_second_order} develops a second-order distributed optimization method and applies Lyapunov theory to prove its convergence.

Proportional-integral (PI) controllers are a vital feedback mechanism in control systems. Previous research has employed the dynamical system model of distributed optimization problems to design PI controllers that drive the state towards an optimum \cite{kia2015distributed,5706956,7554656,gharesifard2013distributed, liu_small_gain}. Experimental results demonstrate that these controllers provide rapid convergence in convex settings \cite{6578120}\cite{kia2015distributed}\cite{7554656}\cite{pilloni2016discontinuous}, but have not demonstrated applicability in optimizing nonconvex objective functions. In another study, \cite{rahili2016distributed} proposes a single and double integral feedback mechanism to select an optimal trajectory, however it requires computing an inverse Hessian at each iteration. Meanwhile, \cite{liu_small_gain} implements a PI controller to achieve reference-tracking in order to optimize a distributed optimization problem with uncertain models.

 ECADO introduces a new approach based on the continuous-time formulation by utilizing an equivalent circuit model that provides novel insights and intuition for designing new optimization methods. \cite{boyd2021distributed} also leverages a circuit analogy to analyze distributed optimization; however, our work develops new methods that builds upon the circuit analogy. Based on circuit models and understanding, we created a proportional-integral (PI) feedback controller that maps proportional and integrator gains to physical elements, thereby simplifying parameter selection. ECADO leverages an averaged second-order method based on circuit principles, which not only provides greater accuracy, but is also computationally efficient. Unlike other continuous-time optimization methods, ECADO also analyzes the numerical integration properties of the continuous-time formulation, which is necessary to accurately evaluate the continuous time trajectory. To ensure accuracy and numerical stability, we developed an adaptive time-step selection algorithm that incorporates numerical integration background not previously discussed in existing distributed optimization literature. Importantly, ECADO is not restricted to convex optimization, and it demonstrates superior convergence rates for realistic, nonconvex applications.


 \subsection*{Paper Organization}

 Section 2 introduces the formulation of the distributed optimization problem. Section 3 presents the continuous-time formulation for the distributed optimization trajectory and defines a new problem structure that is inspired by an equivalent circuit model. Section 4 then discusses methods to solve the new state-space equations using a Gauss-Seidel approach with the conditions for guaranteeing convergence. In section 5, we introduce a new weighting scheme and an adaptive Backward-Euler step sizing method. The overall convergence guarantees and comparison of converge rates with those from discrete time algorithms are included. Finally, in Section 6, we demonstrate the efficacy of ECADO for solving convex and nonconvex optimization problems. Experimentally, we show faster rate of convergence as compared to  state-of-the-art distributed optimization methods for applied problems in machine learning and power systems.

\section{Problem Formulation}
\label{sec:problem_formulation}

We consider solving the following separable optimization problem:

\begin{equation}
    min_{\x} f(\x) 
    \label{eq:opt_obj}
\end{equation}
where $\x\in \R^n$ and $f: \R^n\to\R$.

The objective function is assumed to satisfy the following conditions:

\begin{enumerate}[wide=\parindent,label=\textbf{(A\arabic*)}]
    \item $f\in C^2$ and $\inf_{\x\in\R^n}f(\x)>-R$ for some $R>0$. \label{a1}
    \item $f$ is coercive, i.e., $\lim_{\|\x\|\to\infty} f(\x) = +\infty$. \label{a2}
    \item $\nabla^2 f(\x)$ is non-degenerate. \label{a3}
    \item (Lipschitz and bounded gradients): for all $\x,\y\in\R^n$,  $\Vert \dfdx-\dfdy\Vert\leq L\Vert\x-\y\Vert$, and $\Vert\dfdx\Vert\leq B$ for some $B>0$. \label{a4}
\end{enumerate}
\begin{definition}
We say $\x$ is a \emph{critical point} of $f$ if it satisfies $\nabla_{\x}f(\x) = \vec{0}$. Let $S$ be the set of \emph{critical points}; i.e. $S=\{\x\ |\ \dfdx = \vec{0}\}$.
\end{definition}

The coercivity and differentiability of $f$ guarantee that any minima are within the set $S$.

For the distributed set-up, the optimization problem $f(\x)$ can be separated into sub-problems by
\begin{equation}
    f(\x) = \sum_{i=1}^{m} f_i(\x), \label{eq:distributed_objective}
\end{equation}
where all $m$ separable sub-problems share a common set of variables, $\x$, and each sub-problems satisfies assumptions \ref{a1},\ref{a2}\ref{a3},\ref{a4}.

\section{Distributed Optimization as a Dynamical System}

ECADO models the trajectory of the optimization variables using a continuous-time state space model, known as a scaled gradient flow \cite{agarwal2023equivalent}. The scaled gradient-flow of the optimization problem \eqref{eq:opt_obj} is as follows:
\begin{equation}
    Z(\x)\dot{\x}(t) = -\nabla f(\x(t)).\label{eq:scaled_gd_flow}
\end{equation}
 $Z(\x)$ is a scaling matrix designed to alter the continuous-time trajectory to reach steady-state faster.

The scaled gradient-flow models the transient response of the optimization variables for a corresponding steady-state solution defined by
\begin{equation}
    \x^*\in \R^n \;\; s.t.\;\;
    \dot{\x}^*=\nabla f(\x^*) \equiv 0.
\end{equation}
 Where, $\x^*$ coincides with the set of critical points of the objective function, $S=\{\x\ |\ \dfdx = \vec{0}\}$.

The scaled gradient flow provably converges to a steady-state \cite{agarwal2023equivalent} for objective functions that satisfy assumptions \ref{a1},\ref{a2},\ref{a3},\ref{a4} and have a $Z(\x)$ that is restricted to a diagonal matrix with all diagonal entries satisfying:
\begin{equation}
    Z_{ii} > 0 \label{a5}
\end{equation}

The scaled gradient-flow equations for the separable objective \eqref{eq:distributed_objective} can be defined as
\begin{equation}
    Z(\x)\dot{\x}(t) = -\sum_{i=1}^{m}\nabla f_i(\x(t)). \label{eq:gd_flow_distributed}
\end{equation}
Where \eqref{eq:gd_flow_distributed} provides a basis for representing the new distributed optimization problem by an equivalent circuit model, as shown in Appendix \ref{appendix:separating_caps}.

We note that the scaled gradient flow, as formulated in \eqref{eq:gd_flow_distributed}, is not well-suited for distributed optimization since all of the sub-problems, $f_i$, share the same state vector, $\x$. To address this, we introduce a partitioning scheme to decompose $\x$ into $m+1$ vectors. The $m$ state vectors ($\x_i\in\Re^n \forall i\in[1,m]$) represent the state vector for each sub-problem, while the final state vector, $\x_c\in\Re^n$, represents the state vector for a centralized agent tasked with establishing consensus among all sub-problems. Importantly, this partitioning allows us to leverage insights from the dynamical system to design an effective optimization approach.

Our partitioning scheme is inspired by circuit insights described in Appendix \ref{appendix:separating_caps} and involves two steps. The first step separates the scaling matrix, $Z(\x)$, to $m+1$ scaling matrices. This establishes an independent scaling matrix, $Z_i(\x)$, for each sub-problem, and a scaling matrix, $Z_c$, for the centralized agent. In the second step, we insert a flow variable between each sub-problem and the centralized agent to separate state vectors for each agent. We integrate the flow variables into the dynamical system to provide a proportional-integral controller for added stability to a steady-state.

\subsection{Separating the Scaling Matrices}
Step one:
\begin{equation}
    Z(\x) = Z_c(\x) + \sum_{i=1}^{m} Z_i(\x), \label{eq:separating_z}
\end{equation}
Where $Z_c(\x)$ is a scaling matrix associated with the central agent, and $Z_i(\x)$ is a scaling matrix for each sub-problem ($\forall i \in [1,m]$). This modification is inspired by defining $m+1$ parallel capacitors in the equivalent circuit model as described in Appendix \ref{appendix:separating_caps}

Each scaling matrix, $Z_i$ and $Z_c$ satisfies the assumptions in \ref{a5} and can be selected to mimic various gradient-descent methods, including variable-step size gradient-descent and second-order methods. Additional details on designing a scaling matrix can be found in \cite{agarwal2023equivalent}. The resulting dynamical system is defined as:
\begin{equation}
    (Z_c(\x) + \sum_{i=1}^{m} Z_i(\x)) \dot{\x}(t) + \sum_{i=1}^{m} \nabla f_i (\x) =0. \label{eq:gd_flow_w_multiple_caps}
\end{equation}

\begin{lemma}
Separating the scaling matrices according to \eqref{eq:separating_z} does not affect the steady-state of the scaled-gradient flow ($\dot{\x}=0$).
\end{lemma}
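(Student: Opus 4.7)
The plan is to show that the steady-state condition is determined solely by the gradient term, which is unchanged by the decomposition of $Z$. First I would write down the steady-state conditions for both the original flow \eqref{eq:gd_flow_distributed} and the modified flow \eqref{eq:gd_flow_w_multiple_caps}, then observe that setting $\dot{\x}=0$ eliminates the scaling matrix from the equation in both cases, leaving only the gradient-sum condition.

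In more detail, at a steady state $\x^*$ of \eqref{eq:gd_flow_distributed}, substituting $\dot{\x}^*=0$ annihilates the left-hand side of $Z(\x^*)\dot{\x}^*(t)=-\sum_i\nabla f_i(\x^*)$, yielding $\sum_{i=1}^{m}\nabla f_i(\x^*)=0$, i.e. $\x^*\in S$. Conversely, at a steady state of \eqref{eq:gd_flow_w_multiple_caps}, plugging $\dot{\x}^*=0$ into
\begin{equation}
\bigl(Z_c(\x^*) + \sum_{i=1}^{m} Z_i(\x^*)\bigr)\dot{\x}^*(t) + \sum_{i=1}^{m}\nabla f_i(\x^*) = 0
\end{equation}
again leaves $\sum_{i=1}^{m}\nabla f_i(\x^*)=0$. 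Hence the two flows share exactly the same steady-state set $S$, regardless of how $Z$ is decomposed via \eqref{eq:separating_z}.

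The only subtlety worth noting is that the argument uses nothing about the particular decomposition beyond the fact that each $Z_i$ and $Z_c$ is finite at $\x^*$; this is guaranteed by assumption \ref{a5} (which additionally ensures the aggregate $Z_c+\sum_i Z_i$ is still positive diagonal and hence a valid scaling matrix for the convergence result of \cite{agarwal2023equivalent}). I would close by emphasizing that equivalence holds at the level of the algebraic steady-state condition only — the transient trajectories $\x(t)$ of the two flows are not in general identical, since the scaling matrix governs the speed and path of approach to $S$, but the set of equilibria to which they may converge coincides with the critical points of $f$. There is no real obstacle here; the proof is a one-line substitution, and the main task is simply to state the two directions of set equality cleanly.
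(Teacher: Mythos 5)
Your proposal is correct and follows essentially the same one-line substitution argument as the paper: set $\dot{\x}=0$, observe the scaling term vanishes, and conclude the steady-state condition reduces to $\sum_{i=1}^{m}\nabla f_i(\x)=0$ in both cases. Your version is slightly more complete in that it explicitly states both directions of the set equality and notes that only the equilibria, not the transients, coincide, but the core argument is identical to the paper's.
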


\begin{proof}
 Steady-state is achieved when $\dot{x}=0$. The dynamical system at steady-state is defined as:
\begin{align}
     &(Z_c(\x) + \sum_{i=1}^{m} Z_i(\x)) *0 + \sum_{i=1}^{m} \nabla f_i (\x) =0.   \\
     &\implies \sum_{i=1}^{m} \nabla f_i (\x)=0.
\end{align}
Therefore, the steady-state solution $\x$ is in the set of critical points $S=\{ x| \nabla f(x)=0\}$.
\end{proof}

\subsection{Separating the state-vectors}

The second step introduces a flow variable, $I_i^L$, between the central agent and each sub-problem to separate $\x$ into $m+1$ vectors, $\x_i \in\Re^n \; \forall i \in[1,m]$ and $\x_c \in \Re^n$.
The flow variable is incorporated into the state-space equations as follows:
\begin{align}
    &Z_c(\x_c)\dot{\x}_c(t) + \sum_{i=1}^{m} I^L_i = 0 \label{eq:gd_flow_inductor_cenrtal}\\
    &L\dot{I}_i^L = \x_c - \x_i \;\;\; \forall i\in[1,m] \label{eq:gd_flow_inductor_inductor}\\ 
    &Z_i(\x_i)\dot{\x}_i(t)+\nabla f_i(\x_i) -I_i^L = 0\;\;\; \forall i\in[1,m]. \label{eq:gd_flow_inductor_subproblem}
\end{align}

In the new set of dynamical equations, each sub-problem has a corresponding state-space equation, \eqref{eq:gd_flow_inductor_subproblem}, that is characterized by a local state-vector $\x_i$ and coupled to the central agent via a flow variable, $I_i^L \in \Re^n$.  Note in the new dynamical system, each scaling matrix, $Z_i(\x_i)$, and sub-problem gradient, $\nabla f_i(\x_i)$, are a function of the local state-vector, $\x_i$.

The state-space equations for the central agent, \eqref{eq:gd_flow_inductor_cenrtal}, provide an update for the consensus state variable, $\x_c$, that is driven by the sum of the flow variables from each sub-problem.

Inspired by the equivalent circuit representation in Appendix \ref{appendix:separating_caps}, the dynamics of the flow variables, $I_i^L$, are governed by \eqref{eq:gd_flow_inductor_inductor} with a scaling factor of $L$. $I_i^L$ acts as an actuation signal to the central agent and sub-problem that is directly proportional to the integral of the steady-state error. The state-state of the flow variables is achieved when
\begin{equation}
L \dot{I}_i^L = \x_c - \x_i \equiv 0,
\end{equation}
indicating that at steady-state the local state-vectors are identical to the central agent state-vector.
During the transient response of the system, the steady-state error can be defined as:
\begin{equation}
\varepsilon_i = \x_c - \x_i, \label{eq:steady_state_error}
\end{equation}
which approaches zero as the system converges to steady-state.

The flow variable therefore act as an integral controller that generates the following signal:
\begin{equation}
I_i^L = \frac{1}{L}\int (\x_c - \x_i) dt = \frac{1}{L} \int \varepsilon_i dt,
\end{equation}
with a controller gain of $\frac{1}{L}$. This introduces second-order effects to the dynamical system and an appropriate $L$ value, near the critical damping, ensures the dynamical system reaches steady-state faster.

These new modifications are inspired by inserting an inductor between each sub-problem (sub-circuit) and the central node of the optimization problem.. The flow variable, $I_i^L$ represents an inductor current, whose voltage-current relation is defined by \eqref{eq:gd_flow_inductor_inductor}. The EC model provides a physical analogy that justifies the new set of dynamical equations. Further details on inserting the inductor into the EC model are provided in Appendix \ref{appendix:adding_inductor}.

\begin{theorem}

The modified gradient-flow equations in \eqref{eq:gd_flow_inductor_cenrtal}-\eqref{eq:gd_flow_inductor_subproblem} achieve the same steady-state as \eqref{eq:gd_flow_distributed} that coincide with the critical points of \eqref{eq:distributed_objective}.

\end{theorem}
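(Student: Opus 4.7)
The plan is to set every time derivative in the modified system to zero and chase the algebraic consequences through the three coupled equations, ultimately recovering $\nabla f(\x_c)=0$. I would start from the inductor dynamics \eqref{eq:gd_flow_inductor_inductor} because it is the cleanest: imposing $\dot I_i^L = 0$ immediately yields the consensus condition $\x_c = \x_i$ for every $i\in[1,m]$. This is the key structural fact, since it tells us that the flow variables force the local copies to agree with the central state at steady state, so whatever fixed point the system reaches, all of the $\x_i$ collapse to a common vector.

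Next I would use the sub-problem equation \eqref{eq:gd_flow_inductor_subproblem} with $\dot{\x}_i=0$ to solve for the steady-state flow $I_i^L = \nabla f_i(\x_i)$. Substituting into the central agent equation \eqref{eq:gd_flow_inductor_cenrtal} with $\dot{\x}_c=0$ then gives $\sum_{i=1}^{m} I_i^L = \sum_{i=1}^{m} \nabla f_i(\x_i) = 0$. Finally, using the consensus identity $\x_i = \x_c$ from the first step, this becomes
\begin{equation}
\sum_{i=1}^{m}\nabla f_i(\x_c) = \nabla f(\x_c) = 0,
\end{equation}
which is exactly the critical-point condition $\x_c \in S$, matching the steady state of \eqref{eq:gd_flow_distributed} as established by the preceding lemma.

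To close the equivalence in both directions, I would also note that any critical point $\x^\star$ of $f$ produces a valid steady state of the modified system by setting $\x_c = \x_i = \x^\star$ and choosing $I_i^L = \nabla f_i(\x^\star)$; these values satisfy \eqref{eq:gd_flow_inductor_cenrtal}--\eqref{eq:gd_flow_inductor_subproblem} with all derivatives vanishing (the central equation holds because $\sum_i\nabla f_i(\x^\star)=\nabla f(\x^\star)=0$). Thus the fixed-point sets coincide.

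I do not expect a real obstacle here: the scaling matrices $Z_c, Z_i$ drop out at steady state because they multiply the zero velocity vectors, and no assumption beyond their existence is needed. The only subtlety worth flagging is that the argument uses the inductor equation to \emph{equate} the local and central states \emph{before} invoking the sub-problem equation; reversing that order would leave the gradients evaluated at distinct $\x_i$ and make the reduction to $\nabla f(\x_c)=0$ non-immediate. So the proof is essentially a one-paragraph algebraic verification, with the conceptual content being that the integral action of $I_i^L$ enforces exact consensus at equilibrium.
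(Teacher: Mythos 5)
Your proposal is correct and follows essentially the same route as the paper's own proof: set all time derivatives to zero, read off $\x_c=\x_i$ from the inductor equation and $I_i^L=\nabla f_i(\x_i)$ from the sub-problem equation, and substitute into the central agent equation to obtain $\sum_{i=1}^{m}\nabla f_i(\x_c)=0$. Your added converse direction (every critical point yields a valid steady state) is a small but worthwhile strengthening that the paper's proof omits.
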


\begin{proof}

The steady-state of the dynamical system \eqref{eq:gd_flow_inductor_cenrtal}-\eqref{eq:gd_flow_inductor_subproblem} is defined by zero time-derivative values, namely, $\dot{\x_i}(t)=0$ and $L\dot{I}_i^L =0$. The corresponding set of equations that characterizes the system in steady-state is:
\begin{align}
    \sum_{i=1}^{n} I_i^L &=0 \\
    \x_c - \x_i & =0 \forall i\in n \\
    \nabla f_i(\x_i) -I_i^L &=0 \forall i \in n
\end{align}
The equations above can be rearranged to describe the staedy-state condition as:
\begin{equation}
    \sum_{i=1}^{n} \nabla f^i(\x_c) =0,
\end{equation}
which is defined as a critical point of the separable objective function. 
\end{proof}

\section{Solving the Distributed EC Model}

These equivalent circuit inspired modifications generate a new structure to the problem that ECADO exploits for distributed computing. The new structure of the scaled gradient-flow in \eqref{eq:gd_flow_inductor_cenrtal},\eqref{eq:gd_flow_inductor_inductor},\eqref{eq:gd_flow_inductor_subproblem} is illustrated in a matrix form below:
\begin{equation}  
\begin{bmatrix}
Z(\x_c) & 0 & 0 &0& 0&0&\ldots \\
0 & L & 0 & 0 &0&0& \ldots \\
0 & 0 & L & 0 & 0&0&\ldots \\
0 & 0 & 0 & \ddots & 0&0& \ldots \\
0 & 0 & 0 & 0 & Z_1(\x_1) & 0 & \ldots \\
0 & 0 & 0 & 0 & 0& Z_2(\x_2) & \ldots \\
0 & 0 & 0 & 0 & 0 & 0 & \ddots \\
\end{bmatrix} 
\begin{bmatrix}
    \dot{\x}_c (t) \\
    \dot{I}_1^L(t) \\
    \dot{I}_2^L(t) \\
    \vdots \\
    \dot{\x}_1(t)\\
    \dot{\x}_2(t)\\
    \vdots
\end{bmatrix}
 + 
 \begin{bmatrix}
     0 & 1 & 1 &\ldots &0 &0 & \ldots \\
     1 & 0 &0 &\ldots & -1& 0& \ldots\\
     1& 0 & 0 & 0 &\ldots & -1& \ldots\\
     \vdots & 0 & 0 & \ldots & 0 &0 &-1 \\
     0 & -1 &0 & \ldots & 0& 0& 0\\
     0 & 0 &-1 & \ldots & 0& 0& 0\\
     0 & 0 &0 & \ddots & 0& 0& 0\\   
 \end{bmatrix}    \begin{bmatrix}
        \x_c(t) \\
        I_1^L (t) \\
        I_2^L (t) \\
        \vdots \\
        \x_1(t) \\
        \x_2(t) \\
        \vdots \\
    \end{bmatrix}
+
\begin{bmatrix}
    0 \\
    0 \\
    0 \\
    \vdots \\
    \nabla f_1(\x_1) \\
    \nabla f_2(\x_2) \\
    \vdots 
\end{bmatrix}
\end{equation}

The matrix structure has a positive block-diagonal left hand matrix (which we denote as $Z-$matrix) and weak coupling between the central agent and each sub-problem. 
We further abstract the problem as:
\begin{equation}
    \begin{bmatrix}
        \mathbf{Z}_c & 0 & 0 &0 & \ldots \\
        0 & Z_1(\x_1) & 0& 0 & \ldots \\
        0 & 0 & Z_2(\x_2) & 0 & \ldots \\
        0 & 0 & 0 & \ddots & 0 \\
        0 & 0 & 0 & 0 & Z_m(\x_m)
    \end{bmatrix}
    \begin{bmatrix}
        \dot{y}_c(t) \\
        \dot{\x}_1(t) \\
        \dot{\x}_2(t) \\
        \vdots \\
        \dot{\x}_m(t) \\
    \end{bmatrix}
+
    \begin{bmatrix}
        A & -1 & -1 & -1 & -1 \\
        [-1]_{2,2} & 0 & 0 & 0 & 0\\
        [-1]_{3,3} & 0 & 0 & 0 & 0\\
        \vdots & 0 & 0 & \ddots & 0\\
        [-1]_{m,m} & 0 & 0 & 0 & 0\\
    \end{bmatrix}
    \begin{bmatrix}
        y_c(t) \\
        \x_1(t) \\
        \x_2(t) \\
        \vdots \\
        \x_m(t) \\        
    \end{bmatrix}
 +
    \begin{bmatrix}
        0 \\
        \nabla f_1(\x_1) \\
        \nabla f_2(\x_2) \\
        \vdots \\
        \nabla f_m(\x_m) \\        
    \end{bmatrix}, \label{eq:matrix_distributed_ec}
\end{equation}
where $y_c = [x_c, I_1^L, I_2^L, \ldots, I_m^L]^T$ represents the states of the central agent along with the flow variables. $\mathbf{Z_c}\in \Re^{n+nm, n+nm}$ in the right-hand $Z-$matrix and the matrix $A$ is defined as:
\begin{equation}
    \mathbf{Z_c} = \begin{bmatrix}
        Z_c(\x_c) & 0 & 0 & 0 \\
        0 & L & 0 & 0 \\
        0 & 0 & \ddots & 0 \\
        0 & 0 & 0 & L
    \end{bmatrix}, \;\;\;\;  A = \begin{bmatrix}
        0 & 1 & 1 & \ldots & 1 \\
        1 & 0 & 0 & \ldots & 0 \\
        1 & 0 & 0 & \ldots & 0 \\
        \vdots & 0 & 0 & \ldots & 0 \\
        1 & 0 & 0 & \ldots & 0 \\
    \end{bmatrix},
\end{equation}
where $1$ refers to an identity matrix.

To solve for the critical point of the distributed optimization problem, ECADO simulates the transient response of \eqref{eq:matrix_distributed_ec} by marching in discrete time-steps until the system reaches steady-state. ECADO solves for the state of the entire dynamical system as:
\begin{multline}
\label{eq:ode_integral}
        \begin{bmatrix}
        y_c(t+\Delta t) \\
        \x_1(t+\Delta t) \\
        \x_2(t+\Delta t) \\
        \vdots \\
        \x_m(t+\Delta t) \\        
    \end{bmatrix} = \begin{bmatrix}
        y_c(t) \\
        \x_1(t) \\
        \x_2(t) \\
        \vdots \\
        \x_m(t) \\        
    \end{bmatrix} - \\ \bigint_{t}^{t+\Delta t}     \begin{bmatrix}
        \mathbf{Z}_c^{-1} & 0 & 0 &0 & \ldots \\
        0 & Z_1^{-1}(\x_1) & 0& 0 & \ldots \\
        0 & 0 & Z_2^{-1}(\x_2) & 0 & \ldots \\
        0 & 0 & 0 & \ddots & 0 \\
        0 & 0 & 0 & 0 & Z_m^{-1}(\x_m)
    \end{bmatrix} \left(    \begin{bmatrix}
        A & -1 & -1 & -1 & -1 \\
        [-1]_{2,2} & 0 & 0 & 0 & 0\\
        [-1]_{3,3} & 0 & 0 & 0 & 0\\
        \vdots & 0 & 0 & \ddots & 0\\
        [-1]_{m,m} & 0 & 0 & 0 & 0\\
    \end{bmatrix}
    \begin{bmatrix}
        y_c(t) \\
        \x_1(t) \\
        \x_2(t) \\
        \vdots \\
        \x_m(t) \\        
    \end{bmatrix}
 +
    \begin{bmatrix}
        0 \\
        \nabla f_1(\x_1) \\
        \nabla f_2(\x_2) \\
        \vdots \\
        \nabla f_m(\x_m) \\        
    \end{bmatrix} \right) dt
\end{multline}

The integral on the right-hand side often does not have a closed-form solution and is approximated using numerical integration methods \cite{agarwal2023equivalent}. For example, we can apply an explicit Forward-Euler integration to solve for the series of time points:
\begin{equation}
        \begin{bmatrix}
        y_c(t+\Delta t) \\
        \x_1(t+\Delta t) \\
        \x_2(t+\Delta t) \\
        \vdots \\
        \x_m(t+\Delta t) \\        
    \end{bmatrix} = \Delta t     \begin{bmatrix}
        \mathbf{Z}_c^{-1} & 0 & 0 &0 & \ldots \\
        0 & Z_1^{-1}(\x_1) & 0& 0 & \ldots \\
        0 & 0 & Z_2^{-1}(\x^2) & 0 & \ldots \\
        0 & 0 & 0 & \ddots & 0 \\
        0 & 0 & 0 & 0 & Z_n^{-1}(\x_n)
    \end{bmatrix} \left(    \begin{bmatrix}
        A & -1 & -1 & -1 & -1 \\
        [-1]_{2,2} & 0 & 0 & 0 & 0\\
        [-1]_{3,3} & 0 & 0 & 0 & 0\\
        \vdots & 0 & 0 & \ddots & 0\\
        [-1]_{m,m} & 0 & 0 & 0 & 0\\
    \end{bmatrix}
    \begin{bmatrix}
        y_c(t) \\
        \x^1(t) \\
        \x^2(t) \\
        \vdots \\
        \x^n(t) \\        
    \end{bmatrix}
 +
    \begin{bmatrix}
        0 \\
        \nabla f^1(\x^1) \\
        \nabla f^2(\x^2) \\
        \vdots \\
        \nabla f^n(\x^n) \\        
    \end{bmatrix} \right).
\end{equation}
The Forward-Euler step is identical to performing gradient-descent iterations with a fixed step size equal to $\Delta t$ \cite{agarwal2023equivalent}.

When dealing with large datasets or high-dimensional problems, computing the integral in \eqref{eq:ode_integral} on a single compute node is often computationally intractable. Therefore, we propose a distributed approach that is based on Gauss-Seidel (G-S). G-S updates the state for each variable, at  each time point, by solving each sub-problem independently on a separate compute node. Subsequently, the updates of each sub-problem are communicated to the central agent, which performs a consensus step.

Algorithm \ref{basic-gauss-seidel} outlines our general Gauss-Seidel approach. By dividing the problem into smaller sub-problems and distributing the computation, our method is capable of scaling to accommodate larger datasets and higher dimensional problems.

The algorithm iteratively solves for the system states $[\x_c, I_i^L,\x_i]$ over time-windows of $[t_1, t_2]$ until the system reaches a steady-state characterized by $\dot{\x}_c=0$, thereby satisfying the outer while loop condition on line 6. 

In between discrete timepoints, our G-S approach (performed on lines 9-13) decouples each sub-problem from the central agent by assuming the flow variables, $I_i^L$, are constant. Each sub-problem is then solved independently over the time-window $[t_1,t_2]$ using the values of the flow variables, $I_L^i$, from the previous G-S iteration (line 10). The updates from each sub-problem, $\x_i^{k+1}$, are then used by the central agent to solve for its state-variables, $\x_c^{k+1}$, and the new flow variables, $I_i^{L^{k+1}}$, according to \eqref{eq:matrix_distributed_ec} (line 11-13).

The Gauss-Seidel process (line 10-13) continues until convergence, $|\x_c^{k+1} - \x_c^{k}|^2 = 0$ .

\begin{algorithm}
\caption{Gauss-Seidel for Distributed EC Model}
\label{basic-gauss-seidel}
\textbf{Input: } $\nabla f(\cdot)$,$\x(0)$, $\Delta T>0$
\begin{algorithmic}[1]
\STATE{$\x_c \gets \x(0)$}
\STATE{$\x_i \gets \x(0)$}
\STATE{$I_i^L \gets 0$}
\STATE{$t_1 \gets 0$}
\STATE{$t_2 \gets \Delta T$}

\STATE{\textbf{do while} $\|\dot{\x}_c\|^2 > 0$}
\STATE{\hspace*{\algorithmicindent}$\x_c^k \gets \x_c^{k+1}$}
\STATE{\hspace*{\algorithmicindent}$\x_i^{k} \gets \x_i^{k+1}$}
\STATE{\hspace*{\algorithmicindent}\textbf{do while} $\|\x_c^{k+1} - \x_c^{k}\|^2 > 0$}
\STATE{\hspace*{\algorithmicindent}\hspace*{\algorithmicindent}Parallel Solve for $\x_i^{k+1}(t)$: $Z(\x_i^{k})\dot{\x}_i^{k+1}(t) = -\nabla f(\x_i^{k+1}(t)) + I_i^{L^k}(t) \;\;\; \forall i \in [1,n]\;\;\; \forall t\in[t_1,t_2] $}

\STATE{\hspace*{\algorithmicindent}\hspace*{\algorithmicindent}Solve for $\x_c^{k+1}(t), I_i^{L^{k+1}}(t)$: }
\STATE{\hspace*{\algorithmicindent}\hspace*{\algorithmicindent}\hspace*{\algorithmicindent} $Z_c\dot{\x_c}^{k+1}(t) +\sum_{i=1}^n I_i^{L^{k+1}}(t) =0 \;\;\; \forall t\in[t_1,t_2] $}
\STATE{\hspace*{\algorithmicindent}\hspace*{\algorithmicindent}\hspace*{\algorithmicindent} $L\dot{I}_i^{L^{k+1}}(t)=\x_c^{k+1}(t)-\x_i^{k+1}(t) \;\;\; \forall t\in[t_1,t_2]$}
\STATE{\hspace*{\algorithmicindent} $t_1 += \Delta T$}
\STATE{\hspace*{\algorithmicindent} $t_2 += \Delta T$}
\RETURN $\x_c$
\end{algorithmic}
\end{algorithm}



 Part of the novelty of our ECADO approach is defining the new consensus algorithm to solve for the central agent states. Our novel approach for the central agent is based on two contributions:  1) an aggregate sensitivity model of each sub-problem, and 2) a Backward-Euler numerical integration. We next detail the general methodology to solve each sub-problem, then derive the steps for solving for the central agent states.

\subsection{Solving Sub-Problem}

In Algorithm \ref{basic-gauss-seidel}, each sub-problem is solved independently using a constant value for the flow variables from the previous iteration, $I_L^{i^k}$. The differential equation characterizing the sub-problem is:
\begin{equation}
    Z(\x_i^{L^{k+1}})\dot{\x}_i^{k+1}(t) + \nabla f_i(\x_i^{k+1}) + I_i^{L^k}(t) = 0 \;\;\; \forall t \in [t_1,t_2], \label{eq:sub_problem_ode}
\end{equation}

where $I_i^{L^k}(t)$ is known from the previous G-S iteration $\forall t\in[t_1,t_2]$. 

 The state at each time-point, $\x_i(t+\Delta t)$, is determined by:
\begin{equation}
    \x_i^{k+1}(t+\Delta t) = \x_i^{k+1}(t) - \int_{t}^{t+\Delta t} Z_i^{-1}(\x_i^{k+1}) \left(-I_i^{L^{k+1}}(t) + \nabla f_i(\x_i^{k+1}(t))\right) dt \;\;\; \forall t\in[t_1,t_2]. \label{eq:sub_problem_int}
\end{equation}

To solve for the states in \eqref{eq:sub_problem_int}, two steps are required: 1) designing the scaling matrix, $Z(\x_i)$, and 2) selecting a numerical integration method to approximate the integral in \eqref{eq:sub_problem_int}. Users have the flexibility to select $Z(\x_i)$ and numerical integration method to achieve their desired optimization effects. This is similar to selecting an optimization method to solve each sub-problem.

For instance, using a unit scaling matrix, $Z_i=I$, mimics an uncontrolled gradient-descent, while $Z_i = \nabla^2f_i$ uses the Hessian to model the trajectory of a Newton-Raphson approach. Moreover, users can incorporate momentum-like effects by selecting an explicit Runge-Kutta numerical integration method \cite{agarwal2023equivalent}. For additional information on selecting $Z_i$ and the integration method, refer to \cite{agarwal2023equivalent}.

\subsection{Solving the Central Agent States}

During the G-S iteration, each sub-problem communicates local updates, $\x_i^{k+1}$, to the central agent. The central agent updates the flow variables, $I^L$, and central agent state variables, $\x_c$, are updated by solving the following differential equation:
\begin{align}
    &Z_c \dot{\x_c^{k+1}}(t) + \sum_{i=1}^n I_i^{L^{k+1}} =0 \label{eq:central_agent_ode1} \\
    & L \dot{I}_i^{L^{k+1}} = \x_c^{k+1} - \x_i^{k+1}, \label{eq:central_agent_ode}
\end{align}
where $\x_i^{k+1}$ are the updated state-variables of each sub-problem for the most recent GS iteration. The state vector, $\x_c$, provides a consensus state as is typically used in centralized distributed optimization methods.

ECADO utilizes a two-step process to solve for the central agent state variables in the differential equations described above. In the first step, ECADO introduces an aggregate sensitivity model to the state-space equations \eqref{eq:central_agent_ode}. This provides a weighting scheme based on sensitivity that drives the updates in the central agent states. In the second step, ECADO solves the central agent states using a stable Backward-Euler integration method. This method facilitates faster convergence and has fewer restrictions on step sizes compared to other integration methods. These two steps are derived based on the equivalent circuit model (Appendix \ref{sec:partitioning_ec}) that provides a robust approach to solving the optimization problem.

\subsubsection{Aggregate Sensitivity Model}

During each time window, $[t_1,t_2]$, the central agent receives updates for the state-variables of all sub-problems, $\x_i^{k+1}$. These updates represents each sub-system by a constant value of $\x_i^{k+1}$ during this window.
The constant-valued model of each sub-system, however, does not capture the effect of changes in $\x_c$ and $I_i^L$ on $\x_i$. To address this, we introduce a \emph{linear sensitivity model} to capture the sensitivity of each sub-system to changes in the central agent state variables during the consensus step. The sensitivity model is derived via a commonly applied circuit concept known as the \emph{Thevenin model} (further details are provided in Appendix \ref{sec:partitioning_ec}). This representation consists of a constant value, $\x_i^{th^{k+1}}$, and a linear sensitivity term, $R_i^{th}$, to represent a sub-circuit, which in this case is a sub-problem, as follows:
\begin{equation}
     R_i^{th}I_i^{L^{k+1}} + \x_i^{th^{k+1}},
\end{equation}
where the constant value is defined as:
\begin{equation}
    \x_i^{th^{k+1}} = \x_i^{k+1} - R_i^{th}I_i^{L^{k}}.
\end{equation}

$R_i^{th}$ models the linear sensitivity of the state variable, $\x_i$, with respect to the flow variable, $I_i^L$, and is calculated as:
\begin{equation}
    R_i^{th} = \frac{\partial \x_i}{\partial I_i^L} = (\frac{Z_i}{\Delta t}+\frac{\partial \nabla f_i(\x_i)}{\partial \x_i})^{-1}, \label{eq:Rth}
\end{equation}
where $\frac{\partial \nabla f_i(\x_i)}{\partial \x_i}$ is the Hessian of the sub-problem, $f_i$.
The complete derivation for $R_i^{th}$ is provided in Appendix \ref{sec:deriving_rth}.

The sensitivity model enhances the consensus step by enabling a better prediction of how each sub-problem will respond to changes in the central agent variables ($\x_c$ and $I_i^L$). This results in a more effective convergence toward steady-state. The central agent dynamics now include the linear sensitivity model:
\begin{align}
    Z_c\dot{\x_c}^{k+1}(t) + \sum_{i=1}^n I_i^{L^{k+1}} =0 \\
    L\dot{I}_i^{L^{k+1}} = \x_c^{k+1} - ( I_i^{L^{k+1}} R_i^{th} + \x_{i}^{{th}^{k+1}}) \label{eq:central_chord_ode_inductor}\\
    \x_{i}^{{th}^{k+1}} = \x_i^{k+1} - I_i^{L^k} R_i^{th} \label{eq:central_chord_ode}
\end{align}

The linear sensitivity model for our ECADO approach is based on these sensitivities,  , $R_i^{th}$, for each $ith$ sub-system to generate the weighting that drives the dynamics of the flow variable, $I_i^L$, \eqref{eq:central_chord_ode_inductor}. Unlike other Hessian-based weighting schemes such as DANE \cite{dane_shamir2014communication}, our linear sensitivity model also incorporates the effect of the integration process with the term, $\frac{Z_i}{\Delta t}$, in \eqref{eq:Rth}. Specifically, the numerical integration term in our approach establishes a set of nonlinear equations (known as a companion model \cite{lawrence1995electronic}) that depends on the nonlinear term,  $\nabla f_i$, integration method and time-step, $\Delta t$. This results in the overall linear sensitivity of ECADO incorporating aspects of the integration method to provide an accurate model of the sub-problem.

 ECADO integrates the sensitivity model into the dynamics of the flow variables via \eqref{eq:central_chord_ode_inductor}. This results in a steady-state error, $\varepsilon_i$, (defined in \eqref{eq:steady_state_error}) of:

\begin{equation}
    \varepsilon_i = \x_c - \x_i = L\dot{I}_i^{L^{k+1}} + I_i^{L^{k+1}}R_i^{th}. 
\end{equation}
This equation demonstrates that the added linear sensitivity  acts as a proportional controller with a controller gain of $R_i^{th}$. 


\paragraph{Averaged Aggregate Sensitivity Model}
Computing the linear sensitivity in \eqref{eq:Rth} would require evaluating the Hessian at each G-S iteration, which would likely incur a computational cost that would outweigh any runtime benefit. Therefore, we employ a successive chord approach instead of a Newton-based one for the inner loop nonlinear iterations (analogous to successive chords and lazy Newton approaches applied in in circuit simulation \cite{lawrence1995electronic}) that approximates the Hessian using an averaged Hessian value. For our successive chords approach, we pre-compute for all of the sub-problems an averaged Hessian across a range of values for $\x_i$:
\begin{equation}
    \bar{H} = \frac{1}{p} \sum_{j=1}^{p} \frac{\partial \nabla f_i(\x_i^j)}{\partial \x_i^j},
\end{equation}

The linear sensitivity model can then use the pre-computed, average Hessian to reduce runtime cost as:
\begin{equation}
    \bar{R}_i^{th} = (\frac{Z_i}{\Delta t} + \bar{H})^{-1}. \label{eq:average_thevenin_resistance}
\end{equation}


We can then integrate the constant linear sensitivity model into the central agent state-space equations as follows:\begin{align}
    Z_c\dot{\x_c}^{k+1}(t) + \sum_{i=1}^n I_i^{L^{k+1}} =0 \\
    L\dot{I}_i^{L^{k+1}} = \x_c^{k+1} - ( I_i^{L^{k+1}} \bar{R}_i^{th} + \x_i^{k+1} - I_i^{L^k} \bar{R}_i^{th})\label{eq:central_chord_ode}.
\end{align}

\subsubsection{Backward-Euler Integration}

 To simulate the response of the central agent states, we solve for the state at each time-point in \eqref{eq:central_chord_ode} according to:
 \begin{align}
     \x_c^{k+1}(t+\Delta t) = \x_c^{k+1}(t) - \int_{t}^{t+\Delta t}Z_c^{-1}\sum_{i=1}^{n}I_i^{L^{k+1}}(t)dt \\
     I_i^{L^{k+1}}(t+\Delta t) = I_i^{L^{k+1}}(t) + \frac{1}{L}\int_{t}^{t+\Delta t} \left( \x_c^{k+1}(t) - ( I_i^{L^{k+1}}(t) \bar{R}_i^{th} + \x_i^{k+1} - I_i^{L^{k}}\bar{R}_i^{th} \right ) dt, \label{eq:central_ode_int}
 \end{align}
where, $\x_i^{k+1}(t)$, is a known value from the G-S iteration (from each sub-problem).

Note that the ODE in \eqref{eq:central_ode_int} is linear. We approximate the integral on the right hand side using a Backward-Euler (B.E.) numerical integration since it provides two primary advantages over other numerical integration methods:
\begin{enumerate}
    \item B.E. is numerically stable, thereby it avoids issues of divergence
    \item It adds positive values to only the diagonal entries of the G-S iteration matrix, which improves the numerical conditioning and convergence rate.
\end{enumerate}

The B.E. step approximates the states at time-point, $t+\Delta t$, of \eqref{eq:central_ode_int} by:
\begin{align}
         \x_c^{k+1}(t+\Delta t) = \x_c^{k+1}(t) - \Delta t\left(Z_c^{-1}\sum_{i=1}^{n}I_i^{L^{k+1}}(t+\Delta t) \right) \\
     I_i^{L^{k+1}}(t+\Delta t) = I_i^{L^{k+1}}(t) + \frac{\Delta t}{L}  \left( \x_c^{k+1}(t+\Delta t) - ( I_i^{L^{k+1}}(t+\Delta t) \bar{R}_i^{th} + \x_{i}^{k+1}(t+\Delta t) - I_i^{L^k}(t+\Delta t)\bar{R}_i^{th}) \right ). \label{eq:central_ode_be_step}
\end{align}

This creates the following set of linear equations:

\begin{equation}
    \begin{bmatrix}
        1+\frac{\Delta t \bar{R}_1^{th}}{L} & 0 & \ldots & -\frac{\Delta t}{L} \\
        0 & 1+\frac{\Delta t \bar{R}_2^{th}}{L} & \ldots & -\frac{\Delta t}{L}\\
        0 & 0 & \ddots & \frac{-\Delta t}{L} \\
        -\Delta t Z_c^{-1} & -\Delta t Z_c^{-1} & \ldots & 1
    \end{bmatrix} \begin{bmatrix}
        I_1^{L^{k+1}}(t+\Delta t) \\ 
        I_2^{L^{k+1}}(t+\Delta t) \\
        \vdots \\
        \x_c^{k+1}(t+\Delta t)
    \end{bmatrix}
    = \frac{\Delta t}{L}\begin{bmatrix}
       -\x_1^{k+1} + I_1^{k}(t)\bar{R}_1^{th} \\
       -\x_2^{k+1} + I_2^{k}(t)\bar{R}_2^{th} \\
        \vdots \\
        0
    \end{bmatrix} \label{eq:central_agent_be_step2}
\end{equation}

with a time-step of $\Delta t$.

For a fixed $\Delta t$, the left-hand side in \eqref{eq:central_agent_be_step2} is a constant, invertible matrix. To reduce the runtime cost of evaluating the central agent states in \eqref{eq:central_agent_be_step2}, we LU factor the matrix once for the fixed chord model (constant aggregate sensitivity) and fixed time-step, $\Delta t$. This effectively reduces computing the central agent states to a simple forward-backward substitution.

One of the main benefits of ECADO's Backward-Euler integration is that it is an \emph{implicit} numerical integration method. Optimization literature has generally used \emph{explicit numerical integration} methods, which use previous known values of $\x_c$ to approximate the right-hand side integral of \eqref{eq:central_ode_int}. However, explicit numerical integration, as well as gradient descent, are prone to numerical instability \cite{agarwal2023equivalent}, and must be managed by highly restrictive time-steps to avoid divergence or oscillations.  However, the implicit Backward-Euler integration method is proven to be numerically stable \cite{lawrence1995electronic}, which allows ECADO to utilize large time-steps that will not diverge or numerically oscillate for stable systems.

\subsection{Convergence of EC Consensus Algorithm}

ECADO's consensus algorithm  \eqref{eq:central_agent_be_step2} introduces three new concepts to distributed optimization that facilitate faster convergence:\\
1) A PI controller using flow variables to connect the central agent to each sub-problem \\
2) A weighting scheme for calculating the flow that is based on an averaged sensitivity model, $\bar{R}_i^{th}$ \\
3) A Backward-Euler integration that allows the central agent to take larger step-sizes thereby reducing the number of iterations required to reach a critical point 

 We analyze the effect of these new methods on the convergence rate of the proposed consensus algorithm in \eqref{eq:central_agent_be_step2}.

\subsubsection{Convergence Guarantees}

To determine the convergence rate, we study ECADO's differential equations in the following matrix form:
\begin{equation}
\begin{bmatrix}
    Z_1(\x_1) &0 & \ldots & 0 & 0 & \ldots & 0\\
    0 & Z_2(\x_2) &\ldots & 0 & 0 &\ldots &0\\
    0 & 0 & \ddots & 0 & 0 &\ldots & 0 \\
    0 & 0 & \ldots & L & 0 & \ldots & 0\\
    0 &0 & \ldots & 0 & L & \ldots & 0\\
    0 &0 & \ldots & 0 & 0 & \ddots & 0\\
    0 &0 & \ldots & 0 & 0 & \ldots & Z_c
    \end{bmatrix} \begin{bmatrix}
        \dot{\x}_1\\
        \dot{\x}_2\\
        \vdots \\
        \dot{I}_1^L \\
        \dot{I}_2^L \\
        \vdots \\
        \dot{\x}_c
    \end{bmatrix} = 
    \begin{bmatrix}
        0 & 0 & \ldots & 1 & 0 & \ldots & 0\\
        0 & 0 & \ldots & 0 & 1 &\ldots & 0\\
        0 & 0 & \ddots &0 & 0 & \ddots & 0 \\
        -1 & 0 & \ldots & -\bar{R}_1^{th} & 0 & \ldots &1\\
        0 & -1 & \ldots & 0 & -\bar{R}_2^{th} & \ldots & 1 \\
        0 & 0 & \ddots & 0 & 0 &\ddots & 1\\
        0 & 0 & \ldots &-1 & -1 & \ldots &0
    \end{bmatrix}
    \begin{bmatrix}
        \x_1 \\
        \x_2 \\
        \vdots \\
        I_1^L \\
        I_2^L \\
        \vdots \\
        \x_c
    \end{bmatrix} + \begin{bmatrix}
        -\nabla f_1(\x_1) \\
        -\nabla f_1(\x_2) \\ 
        \vdots \\
        I_1^{L^k} \bar{R}_1^{th}\\
        I_2^{L^k} \bar{R}_2^{th} \\
        \vdots \\
        0
    \end{bmatrix} \label{eq:ec_ode_matrix}
\end{equation}

\begin{theorem}
    The Gauss-Seidel process \eqref{eq:ec_ode_matrix} is guaranteed to converge using an integration method with $\Delta t \rightarrow 0$.
\end{theorem}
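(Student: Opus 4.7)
The plan is to recast the inner Gauss-Seidel loop (lines 9--13 of Algorithm \ref{basic-gauss-seidel}) as a linear fixed-point iteration obtained by the Backward-Euler discretization of \eqref{eq:ec_ode_matrix}, and then show that the spectral radius of the G-S iteration matrix vanishes as $\Delta t \to 0$. Applying B-E at a single time step $[t,t+\Delta t]$ and linearizing the sub-problem gradients about the current iterate yields a linear system $M(\Delta t)\,v^{k+1}=b^{k}$ for the stacked unknowns $v=[\x_1,\dots,\x_m,I_1^L,\dots,I_m^L,\x_c]^{\top}$. Inspection of the B-E equations \eqref{eq:sub_problem_ode} and \eqref{eq:central_ode_be_step} (and the companion form used in \eqref{eq:central_agent_be_step2}) shows that $M(\Delta t)$ has diagonal blocks of the form $Z_i+\Delta t\,\nabla^{2}f_i$, $L\,\mathbb{I}+\Delta t\,\bar{R}_i^{th}$, and $Z_c$, all $O(1)$, while \emph{every} off-diagonal coupling block carries a factor of $\Delta t$.

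Next I would write the splitting $M(\Delta t)=M_L(\Delta t)+M_U(\Delta t)$ induced by Algorithm \ref{basic-gauss-seidel}: $M_L$ contains the block-diagonal sub-problem solves together with the central-agent block solved jointly via \eqref{eq:central_agent_be_step2}, while $M_U$ collects the remaining couplings between the sub-problem block and the central-agent block. The G-S iteration matrix is
\begin{equation*}
G(\Delta t)=-M_L(\Delta t)^{-1}M_U(\Delta t),
\end{equation*}
and G-S convergence reduces to $\rho(G(\Delta t))<1$. By assumptions \ref{a4} and \ref{a5} together with $L>0$ and the positive definiteness of $\mathbf{Z}_c$, one verifies that $\|M_L(\Delta t)^{-1}\|\le C_1$ uniformly for $\Delta t\in(0,\Delta t_0]$; this uses non-degeneracy of $\nabla^{2}f_i$ (\ref{a3}) so the sub-problem diagonal blocks are invertible, and the fact that the central-agent B-E block in \eqref{eq:central_agent_be_step2} is diagonally dominant with positive diagonal. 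Since every entry of $M_U$ is proportional to $\Delta t$, one has $\|M_U(\Delta t)\|\le C_2\Delta t$.

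Combining the two bounds yields $\rho(G(\Delta t))\le \|G(\Delta t)\|\le C_1C_2\,\Delta t$, so for every $\Delta t$ below the threshold $\Delta t^{\star}=\min\{\Delta t_0,(C_1C_2)^{-1}\}$ the G-S iteration is a contraction on the linearized system and therefore converges geometrically. Taking $\Delta t\to 0$ from above guarantees $\rho(G(\Delta t))\to 0$ and hence convergence of the G-S process at every time point, as claimed.

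The main obstacle is the uniform bound on $\|M_L(\Delta t)^{-1}\|$ in the presence of the nonlinear gradients $\nabla f_i$: after linearization one must ensure that the local Hessians remain in a regime where \ref{a3} gives a uniform lower bound on their minimum singular value along the iteration path. The secondary subtlety is bookkeeping the $\bar{R}_i^{th}$ contributions from \eqref{eq:central_chord_ode}, which appear on both the diagonal and the off-diagonal of the central-agent B-E block, so the splitting $M_L,M_U$ must be chosen so that only the $O(\Delta t)$ coupling terms land in $M_U$ while the $O(1)$ sensitivity term is retained in $M_L$; once this is done, the spectral-radius estimate above goes through.
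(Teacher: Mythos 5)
Your proposal is essentially correct but takes a genuinely different route from the paper. The paper's argument (Appendix~\ref{sec:convergence_proof_appendix}) is a waveform-relaxation contraction proof in the style of \cite{white2012relaxation}: it works with the derivative waveforms $\dot{X}^k$ over the window $[t_1,t_2]$, splits $C(X)=D+L-U$, invokes strict diagonal dominance to get $\|(D+L)^{-1}U\|_\infty<1$, and combines Lipschitz continuity of $\hat{f}$ with a Banach fixed-point lemma and a norm-equivalence lemma to show the iteration contracts; notably, $\Delta t$ never appears explicitly in that appendix proof, and the role of $\Delta t\to 0$ is asserted only in the main-text sketch via diagonal dominance of the companion matrix. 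You instead work directly with the Backward-Euler companion system at a single time step, observe that every off-diagonal coupling block is $O(\Delta t)$ while the diagonal blocks $Z_i+\Delta t\,\nabla^2 f_i$, $L+\Delta t\,\bar{R}_i^{th}$, $Z_c$ are $O(1)$ and uniformly invertible (via \ref{a4} and \eqref{a5}), and conclude $\rho(G(\Delta t))\le C_1C_2\Delta t$. Your route buys an explicit quantitative statement of exactly why $\Delta t\to 0$ forces convergence --- something the paper's appendix never actually delivers --- and is more elementary. What it gives up is the waveform structure: Algorithm~\ref{basic-gauss-seidel} holds $I_i^{L^k}(t)$ fixed over the whole window and re-solves a nonlinear sub-problem ODE on $[t_1,t_2]$ at each sweep, so a complete version of your argument must either restrict the window to a single step or propagate the contraction through the nonlinear sub-problem solves using the Lipschitz bound in \ref{a4}, which is precisely the work the paper's Lemmas~\ref{lemma:contraction} and~\ref{lm:norm_ineq} do. You flag this obstacle yourself; closing it along the lines you indicate would make the proof complete.
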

\begin{proof}
    A sufficient condition for the convergence of any convex or nonconvex optimization function using ECADO is that the left-hand side, block-diagonal $Z-$matrix is diagonally dominant. Convergence is guaranteed to be achievable since there exists a $\Delta t$ of sufficiently small magnitude for which each block diagonal in the $Z-$matrix is diagonally-dominant. The diagonal entries of $Z_{ii}$ satisfy the assumptions in \eqref{a5} and ensure entries on the diagonal are greater than 0 when we design the controller gain to be $L>0$.  A more complete proof of convergence for the diagonal dominance is provided in Appendix \ref{sec:convergence_proof_appendix}. 
\end{proof} 

\subsubsection{Convergence Rate of Distributed EC}
We analyze the convergence rate of ECADO for a quadratic objective function, $f_i(\x_i)$:
\begin{equation}
    f_i(\x_i) = \frac{1}{2} A_i \|\x_i\|^2 + B_i \x_i + C_i, \label{eq:quadratic_obj}
\end{equation}
where $A_i \succ 0$.
ECADO's state-space matrix equation  \eqref{eq:ec_ode_matrix} for solving the quadratic objective function \eqref{eq:quadratic_obj} is:
\begin{multline}
\label{eq:quadratic_ode}
    \begin{bmatrix}
        \dot{\x}_1\\
        \dot{\x}_2\\
        \vdots \\
        \dot{I}_1^L \\
        \dot{I}_2^L \\
        \vdots \\
        \dot{\x}_c
    \end{bmatrix} = \\ \begin{bmatrix}
    Z_1(\x_1) &0 & \ldots & 0 & 0 & \ldots & 0\\
    0 & Z_2(\x_2) &\ldots & 0 & 0 &\ldots &0\\
    0 & 0 & \ddots & 0 & 0 &\ldots & 0 \\
    0 & 0 & \ldots & L & 0 & \ldots & 0\\
    0 &0 & \ldots & 0 & L & \ldots & 0\\
    0 &0 & \ldots & 0 & 0 & \ddots & 0\\
    0 &0 & \ldots & 0 & 0 & \ldots & Z_c
    \end{bmatrix}^{-1} \left (  \begin{bmatrix}
        -A_1 & 0 & \ldots & 1 & 0 & \ldots & 0\\
        0 & -A_2 & \ldots & 0 & 1 &\ldots & 0\\
        0 & 0 & \ddots &0 & 0 & \ddots & 0 \\
        -1 & 0 & \ldots & -\bar{R}_1^{th} & 0 & \ldots &1\\
        0 & -1 & \ldots & 0 & -\bar{R}_2^{th} & \ldots & 1 \\
        0 & 0 & \ddots & 0 & 0 &\ddots & 1\\
        0 & 0 & \ldots &-1 & -1 & \ldots &0
    \end{bmatrix}
    \begin{bmatrix}
        \x_1 \\
        \x_2 \\
        \vdots \\
        I_1^L \\
        I_2^L \\
        \vdots \\
        \x_c
    \end{bmatrix} + \begin{bmatrix}
        -B_1 \\
        -B_2 \\ 
        \vdots \\
        I_1^{L^k} \bar{R}_1^{th}\\
        I_2^{L^k} \bar{R}_2^{th} \\
        \vdots \\
        0
    \end{bmatrix}   \right )
\end{multline}

\begin{multline}
    \implies \begin{bmatrix}
        \dot{\x}_1(t)\\
        \dot{\x}_2(t)\\
        \vdots \\
        \dot{I}_1^L(t) \\
        \dot{I}_2^L(t) \\
        \vdots \\
        \dot{\x}_c(t)
    \end{bmatrix} = \begin{bmatrix}
        -Z_1^{-1}(\x_1)A_1 & 0 & \ldots & Z_1^{-1}(\x_1) & 0 & \ldots & 0\\
        0 & -Z_2^{-1}(\x_2)A_2 & \ldots & 0 & Z_2^{-1}(\x_2) &\ldots & 0\\
        0 & 0 & \ddots &0 & 0 & \ddots & 0 \\
        -L^{-1} & 0 & \ldots & -L^{-1}\bar{R}_1^{th} & 0 & \ldots &L^{-1}\\
        0 & -L^{-1} & \ldots & 0 & -L^{-1}\bar{R}_2^{th} & \ldots & L^{-1} \\
        0 & 0 & \ddots & 0 & 0 &\ddots & L^{-1}\\
        0 & 0 & \ldots &-Z_c^{-1} & -Z_c^{-1} & \ldots &0
    \end{bmatrix}
    \begin{bmatrix}
        \x_1(t) \\
        \x_2(t) \\
        \vdots \\
        I_1^L(t) \\
        I_2^L(t) \\
        \vdots \\
        \x_c(t)
    \end{bmatrix} + \begin{bmatrix}
        -Z_c^{-1}B_1 \\
        -Z_c^{-1}B_2 \\ 
        \vdots \\
        L^{-1}I_1^{L}(t-\Delta t) \bar{R}_1^{th}\\
        L^{-1}I_2^{L}(t-\Delta t) \bar{R}_2^{th} \\
        \vdots \\
        0
    \end{bmatrix}
\end{multline}

The right-hand side matrix can be abstracted as $A_{ec}$ and separated into a block-diagonal ($D_{ec}$), block lower-triangular ($L_{ec}$) and block upper-triangular ($U_{ec}$) matrices as:
\begin{equation}
    A_{ec}=D_{ec} +L_{ec} + U_{ec}.
\end{equation}

ECADO solves the ODE using a Backward-Euler integration and a G-S iteration according to \cite{alberto_numerical_methods}:
\begin{equation}
    \begin{bmatrix}
        \x_1^{k+1} \\
        \x_2^{k+1} \\
        \vdots \\
        I_1^{L^{k+1}} \\
        I_2^{L^{k+1}} \\
        \vdots \\
        \x_c^{k+1}
    \end{bmatrix} = (I-\Delta t(D_{ec}+L_{ec}))^{-1}(I + \Delta t U_{ec})    \begin{bmatrix}
        \x_1^k \\
        \x_2^k \\
        \vdots \\
        I_1^{L^k} \\
        I_2^{L^k} \\
        \vdots \\
        \x_c^k
    \end{bmatrix} + \Delta t \begin{bmatrix}
        -Z_c^{-1}B_1 \\
        -Z_c^{-1}B_2 \\ 
        \vdots \\
        L^{-1}I_1^{L}(t-\Delta t) \bar{R}_1^{th}\\
        L^{-1}I_2^{L}(t-\Delta t) \bar{R}_2^{th} \\
        \vdots \\
        0
    \end{bmatrix}.
\end{equation}

Taking the difference between the $k+1$ iteration and $k$ iteration:
\begin{equation}
    \begin{bmatrix}
    \x_1^{k+1} \\
    \x_2^{k+1} \\
    \vdots \\
    I_1^{L^{k+1}} \\
    I_2^{L^{k+1}} \\
    \vdots \\
    \x_c^{k+1}
\end{bmatrix} - \begin{bmatrix}
    \x_1^{k} \\
    \x_2^{k} \\
    \vdots \\
    I_1^{L^{k}} \\
    I_2^{L^{k}} \\
    \vdots \\
    \x_c^{k+1}
\end{bmatrix}= (I-\Delta t(D_{ec}+L_{ec}))^{-1}(I + \Delta t U_{ec}) \left (   \begin{bmatrix}
    \x_1^k \\
    \x_2^k \\
    \vdots \\
    I_1^{L^k} \\
    I_2^{L^k} \\
    \vdots \\
    \x_c^k
\end{bmatrix}  - \begin{bmatrix}
    \x_1^{k-1} \\
    \x_2^{k-1} \\
    \vdots \\
    I_1^{L^{k-1}} \\
    I_2^{L^{k-1}} \\
    \vdots \\
    \x_c^{k-1}
\end{bmatrix} \right ).
\end{equation}
we define the iterative matrix for the G-S process, $G_{ec}$, as:
\begin{equation}
    G_{ec} = (I-\Delta t(D_{ec} + L_{ec}))^{-1}(I + \Delta t U_{ec}).
\end{equation}
The convergence rate of G-S is equal to:
\begin{equation}
O(e^{\rho(G_{ec})k})
\end{equation}
where $\rho$
is the spectral radius of $G_{ec}$. To study the convergence rate of ECADO, we analyze the spectral radius of $G_{ec}$, which is upper bounded by:
\begin{equation}
    \rho(G_{ec}) \leq \rho ((I-\Delta t(D_{ec} + L_{ec}))^{-1}) \rho(I + \Delta t U_{ec}).
\end{equation}

The eigenvalues for the block upper-triangular matrix $(I+\Delta tU_{ec})$ are equal the eigenvalues of the diagonal block entries, $I$. Therefore with no dependence on $\Delta t$, hence no control of convergence using step-size, the spectral radius of $\rho(I+\Delta t U_{ec})$ is 1. 

The spectral radius of $I-\Delta t(D_{ec} + L_{ec})^{-1}$ is:
\begin{equation}
    \rho((I-\Delta t(D_{ec} + L_{ec}))^{-1}) = \max \left [ \frac{1}{eig(I+\Delta tZ_i^{-1}(\x_i)A_i)}, \frac{1}{eig(1+L^{-1}R_i^{th})}, eig(\Delta t Z_c^{-1}) \right ]
\end{equation}

It follows that the spectral radius of $G_{ec}$ is bounded by:
\begin{equation}
    \rho(G_{ec}) \leq \max \left [ \frac{1}{eig(I+\Delta tZ_i^{-1}(\x_i)A_i)}, \frac{1}{eig(1+L^{-1}R_i^{th})}, eig(\Delta t Z_c^{-1}) \right ], \label{eq:ec_spectral_radius}
\end{equation}

and the upperbound on the convergence rate is:
\begin{equation}
    O(e^{\rho(G_{ec})k}) = O(e^{\max \left [ \frac{1}{eig(I+\Delta tZ_i^{-1}(\x_i)A_i)}, \frac{1}{eig(1+\Delta t L^{-1}R_i^{th})}, eig(\Delta t Z_c^{-1}) \right ]k}).
\end{equation}

The convergence rate of the G-S iteration is influenced by the diagonal dominance of $D_{ec}$. Generally, larger values on the diagonal provide faster convergence to the steady-state \cite{alberto_numerical_methods}. The features of ECADO (adding flow variables, sensitivity model, and using Backward-Euler) add only positive elements (namely $Z_c, L$ and $\Delta t$ ) to the diagonal, thereby only improving the convergence rate. For example, we can pre-select a value of $\Delta t L$ that decreases  $\frac{1}{eig(1+\Delta t L^{-1}R_i^{th})}$. Additionally, a larger $Z_c$ will decrease $eig(\Delta t Z_c^{-1})$, and provide faster convergence.

\paragraph{Effect of Backward-Euler Integration}

The implicit Backward-Euler integration provides better convergence by adding a $\Delta t Z_c^{-1}$ to the diagonal of the $Z-$matrix, which only increases diagonal dominance of the G-S matrix unlike other numerical integration techniques. For example, Forward-Euler integration of \eqref{eq:central_ode_int}  would only add $I$ on the diagonal.

\paragraph{Effect of Linear Sensitivity Model}

 The aggregate sensitivity, $\bar{R}_i^{th}$, incorporates the sensitivity values into the block diagonal, $D_{ec}$, thereby improving the convergence rate of Gauss-Seidel.  

To analyze the effect of $\bar{R}_i^{th}$, the limit as  $\bar{R}_i^{th} \rightarrow 0$ (indicating we remove the linear sensitivity model) shifts the upper bound of the spectral radius to:
\begin{equation}
    \rho(G_{ec})_{\bar{R}_{th}\rightarrow0} \leq \max \left [ \frac{1}{eig(I+\Delta tZ_i^{-1}(\x_i)A_i)}, \frac{1}{eig(I)}, eig(\Delta t Z_c^{-1}) \right ], 
\end{equation}
which is bounded by a spectral radius of 1, which would produce a slower rate of convergence.

\paragraph{Convergence Rate Comparison} 
We can tune the values $L,Z_c$ and $\Delta t$ to ensure a convergence rate of ECADO, for the quadratic objective function, is comparable to or exceeds the convergence rate of state-of the art distributed optimization methods. For this, we compare with centralized gradient-descent (CGD) \cite{centralized_gradient_descent}, ADMM \cite{admm_boyd2011distributed} and DANE \cite{dane_shamir2014communication}. The error bound after $k$ iterations for each method using G-S are compared in Appendix \ref{sec:convergence_proof_appendix}. The errors are a function of the spectral radius of the iterative matrix given by:
\begin{equation}
    \|\x^k - \x^* \|^2 = O(e^{\rho(G)k}).
\end{equation}
To compare the convergence rates of the methods, we analyze spectral radius for the respective iterative matrix in Table \ref{table:convergence_rate_comparison}.

\begin{table}[h]
\centering 
\label{table:convergence_rate_comparison}
\begin{tabularx}{\textwidth} { 
  | >{\raggedright\arraybackslash}X 
  | >{\centering\arraybackslash}X 
  | >{\centering\arraybackslash}X | >{\centering\arraybackslash}X |>{\centering\arraybackslash}X |}
 \hline
 - & One-Shot & ADMM & DANE & ECADO \\
 \hline
  $\rho(G)$  & $\max (eig(1-\frac{\alpha}{n}\sum_{i=1}^n A_i))$  & $\max eig(1-\alpha(A_i + \eta I))$ & $\max eig(1-\frac{\alpha}{n}\sum_{i=1}^n (A_i + \mu I)^{-1}A_i)$ &  $ \max \left [  \rho_1, \rho_2, \rho_3 \right ]$  \\
\hline
\end{tabularx}
\end{table}

The spectral radius of the iterative matrix of ECADO, $G_{ec}$ is bounded by three values:
\begin{align}
    \rho(G_{ec}) \leq \max(\rho_1, \rho_2, \rho1) \\
    \rho_1 = \max \frac{1}{eig(I+\Delta tZ_i^{-1}(\x_i)A_i)} \label{eq:rho1}\\
    \rho_2 = \max  \frac{1}{eig(1+L^{-1}R_i^{th})} \label{eq:rho2}\\
    \rho_3 = \max eig(\Delta t Z_c^{-1})  \label{eq:rho3}
\end{align}

We can select the three hyperparameters, $Z_c$, $L$, and $\Delta t$, to optimally tune the three spectral radiuses ($\rho_1, \rho_2, \rho_3$). 

 By increasing $Z_c \rightarrow \inf$, we effectively decrease $\rho_3 \rightarrow 0$ \eqref{eq:rho3} and improve the upper bound on the convergence rate.

We can also reduce the spectral radius of $\rho_2$ \eqref{eq:rho2} by setting $L_i = \beta R_i^{th}$, which sets $\rho_2$ to:
\begin{equation}
    \rho_2 = \frac{1}{1+\beta}.
\end{equation}
A large value of $\beta>0$ ensures that the spectral radius is less than 1.

It should be noted that $\Delta t$ can be reduced to decrease $\rho_3$. However, $\Delta t$ must also satisfy conditions for numerical accuracy. We next discuss an adaptive method to select a time-step, $\Delta t$.

\subsection{Adaptive Time-Step Selection}

ECADO uses a B.E. integration to approximate the continuous time integration on the right-hand side of \eqref{eq:central_ode_int}. 
 We devise an adaptive time-step selection algorithm that ensures the iterations reach  a steady state by satisfying two conditions: local accuracy and G-S convergence.

\paragraph{Numerical Accuracy Condition: }
 The local truncation error for the B.E> approximation in \eqref{eq:central_ode_int} can be estimated by \cite{lawrence1995electronic}:
\begin{equation}
    \varepsilon_{BE}^C = -\frac{-\Delta t}{2Z_c}\left [ \sum_{i=1}^{n}I_i^{L^{k+1}}(t) - \sum_{i=1}^{n}I_i^{L^{k+1}}(t+\Delta t)  \right ], \label{eq:be_lte_cap}
\end{equation}
for the central agent state ODE. The local truncation error for the flow variable in \eqref{eq:central_chord_ode_inductor} can be estimated similarly \cite{lawrence1995electronic} as:
\begin{multline}
    \varepsilon_{BE_i}^L = -\frac{\Delta t}{2L}[ (\x_c^{k+1}(t) - I_i^{L^{k+1}}(t)\bar{R}_i^{th} + \x_i^{k+1}(t) - I_i^{L^k}(t)\bar{R}_i^{th}) \\ - ((\x_c^{k+1}(t+\Delta t) - I_i^{L^{k+1}}(t+\Delta t)\bar{R}_i^{th} + \x_i^{k+1}(t+\Delta t) - I_i^{L^k}(t+\Delta t)\bar{R}_i^{th}))  ]. \label{eq:be_lte_ind}
\end{multline}

Using these measures, we introduce the following accuracy criteria:
\begin{equation}
    max(|\varepsilon_{BE}|) \leq \delta,
\end{equation}
where $\delta$ is a predefined tolerance for the maximum local truncation error. This ensures that the local approximation of Backward-Euler does not diverge from the continuous time trajectory.

\paragraph*{Gauss-Seidel Convergence Condition: }

Although Backward-Euler integration is numerically stable and guaranteed to converge to the steady-state for any given $\Delta t$ \cite{lawrence1995electronic}, the choice of $\Delta t$ influences the convergence of the inner G-S process, which is defined as:
\begin{equation}
    \|\dot{\x_c}^{k+1} - \dot{\x_c}^k\|^2 \leq \|\dot{\x_c}^k - \dot{\x_c}^{k-1}\|^2. \label{eq:gs_criteria}
\end{equation}
The proof of this convergence criteria lies in Appendix \ref{sec:convergence_proof_appendix}, with $j+1 = k$. 

At each iteration, we must select a step-size that ensures GS is converging toward the true solution according to the condition in \eqref{eq:gs_criteria}. However, directly evaluating $\dot{\x}_c$ would require additional numerical approximations. Instead, we recognize that the criteria in \eqref{eq:gs_criteria} is equivalent to:
\begin{equation}
    \| \sum_{i=1}^{n}I_i^{L^{k+1}} - \sum_{i=1}^{n}I_i^{L^{k}} \| \leq \| \sum_{i=1}^{n}I_i^{L^{k}} - \sum_{i=1}^{n}I_i^{L^{k-1}} \|, \label{eq:gs_criteria2}
\end{equation} 
since  $\dot{\x_c} = Z_c^{-1} \sum_{i=1}^{n} I_i^L$. The condition in \eqref{eq:gs_criteria2} is easily evaluated at each iteration.

To select a $\Delta t$ that satisfies the accuracy and convergence conditions, we use a backtracking line-search method at each iteration of the central agent. In this approach, shown in Algorithm \ref{adaptive-time-step}, the time-step (i.e., step-size) is adaptively reduced by a factor $\eta\in(0,1)$ until the two conditions are satisfied.

Importantly, following the convergence proof in Appendix \ref{sec:convergence_proof_appendix}, there exists a $\Delta t>0$ that guarantees G-S convergence. Additionally, by the proof in \cite{lawrence1995electronic}, there also exists a $\Delta t>0$ that will satisfy the accuracy criteria. This indicates that Algorithm \ref{adaptive-time-step} is bounded. In practice, Algorithm \ref{adaptive-time-step} rarely requires multiple iterations to find a $\Delta t$ that satisfies the two conditions.

\begin{algorithm}
    \caption{Adaptive Time-Stepping Method}
    \label{adaptive-time-step}
    \textbf{Input: } $\eta\in(0,1)$
    \begin{algorithmic}[1]
    
    \STATE{\textbf{do while} $\| \sum_{i=1}^{n}I_i^{L^{k+1}} - \sum_{i=1}^{n}I_i^{L^{k}} \| \leq \| \sum_{i=1}^{n}I_i^{L^{k}} - \sum_{i=1}^{n}I_i^{L^{k-1}} \|$ and $max(|\varepsilon_{BE}|) \leq \delta$}
    \STATE{\hspace*{\algorithmicindent}$\Delta t = \eta \Delta t$}
    
    \STATE{\hspace*{\algorithmicindent}Solve for $\x_c^{k+1}(t), I_i^{L^{k+1}}(t)$: }
    \STATE{\hspace*{\algorithmicindent}\hspace*{\algorithmicindent}\hspace*{\algorithmicindent} $    \begin{bmatrix}
        1-\frac{\Delta t \bar{R}_1^{th}}{L} & 0 & \ldots & -\frac{\Delta t}{L} \\
        0 & 1-\frac{\Delta t \bar{R}_2^{th}}{L} & \ldots & -\frac{\Delta t}{L}\\
        0 & 0 & \ddots & \frac{-\Delta t}{L} \\
        -\Delta t Z_c^{-1} & -\Delta t Z_c^{-1} & \ldots & 1
    \end{bmatrix} \begin{bmatrix}
        I_1^{L^{k+1}}(t+\Delta t) \\ 
        I_2^{L^{k+1}}(t+\Delta t) \\
        \vdots \\
        \x_c^{k+1}(t+\Delta t)
    \end{bmatrix}
    = \frac{\Delta t}{L}\begin{bmatrix}
        -\x_1^{k+1} + I_1^{k}(t)\bar{R}_1^{th} \\
        -\x_2^{k+1} + I_2^{k}(t)\bar{R}_2^{th} \\
         \vdots \\
         0
     \end{bmatrix}$}
    \RETURN $\Delta t$
    \end{algorithmic}
    \end{algorithm}

\subsection{Full Distribted EC Algorithm}

The complete algorithm for solving the distributed EC model is shown in Algorithm \ref{full-gauss-seidel}. The algorithm begins by precomputing the average sensitivity models, $\bar{R}_i^{th}$, in line 6 and the LU factor of \eqref{eq:central_agent_be_step2} to reduce the optimization runtime. The impacts of the B.E. numerical integration and  sensitivity models are captured by the state variables in line 14.

\begin{algorithm}
    \caption{Modified Gauss-Seidel for Distributed EC Model}
    \label{full-gauss-seidel}
    \textbf{Input: } $\nabla f(\cdot)$,$\x(0)$, $\Delta T>0$
    \begin{algorithmic}[1]
    \STATE{$\x_c \gets \x(0)$}
    \STATE{$\x_i \gets \x(0)$}
    \STATE{$I_i^L \gets 0$}
    \STATE{$t_1 \gets 0$}
    \STATE{$t_2 \gets \Delta T$}

    \STATE{Precompute $\bar{R}_i^{th} \; \forall i\in[1,n]$} 
    \STATE{Precompute LU-factor for matrix in \eqref{eq:central_agent_be_step2}}
    
    \STATE{\textbf{do while} $\|\dot{\x}_c\|^2 > 0$}
    \STATE{\hspace*{\algorithmicindent}$\x_c^k \gets \x_c^{k+1}$}
    \STATE{\hspace*{\algorithmicindent}$\x_i^{k} \gets \x^{i^{k+1}}$}
    \STATE{\hspace*{\algorithmicindent}\textbf{do while} $\|\x_c^{k+1} - \x_c^{k}\|^2 > 0$}
    \STATE{\hspace*{\algorithmicindent}\hspace*{\algorithmicindent}Parallel Solve for $\x_i^{k+1}(t)$: $Z(\x_i^{k+1})\dot{\x}_i^{k+1}(t) = -\nabla f(\x_i^{k+1}(t)) + I_i^{L^k}(t) \;\;\; \forall i \in [1,n]\;\;\; \forall t\in[t_1,t_2] $}

    \STATE{\hspace*{\algorithmicindent}\hspace*{\algorithmicindent}Solve for $\x_c^{k+1}(t), I_i^{L^{k+1}}(t)$: }
    \STATE{\hspace*{\algorithmicindent}\hspace*{\algorithmicindent}\hspace*{\algorithmicindent} Select $\Delta t$ according to Algorithm \ref{adaptive-time-step}}
    \STATE{\hspace*{\algorithmicindent}\hspace*{\algorithmicindent}\hspace*{\algorithmicindent} $    \begin{bmatrix}
        1-\frac{\Delta t \bar{R}_1^{th}}{L} & 0 & \ldots & -\frac{\Delta t}{L} \\
        0 & 1-\frac{\Delta t \bar{R}_2^{th}}{L} & \ldots & -\frac{\Delta t}{L}\\
        0 & 0 & \ddots & \frac{-\Delta t}{L} \\
        -\Delta t Z_c^{-1} & -\Delta t Z_c^{-1} & \ldots & 1
    \end{bmatrix} \begin{bmatrix}
        I_1^{L^{k+1}}(t+\Delta t) \\ 
        I_2^{L^{k+1}}(t+\Delta t) \\
        \vdots \\
        \x_c^{k+1}(t+\Delta t)
    \end{bmatrix}
    = \frac{\Delta t}{L}\begin{bmatrix}
        -\x_1^{k+1} + I_1^{k}(t)\bar{R}_1^{th} \\
        -\x_2^{k+1} + I_2^{k}(t)\bar{R}_2^{th} \\
         \vdots \\
         0
     \end{bmatrix}$}
    \STATE{\hspace*{\algorithmicindent} $t_1 += \Delta T$}
    \STATE{\hspace*{\algorithmicindent} $t_2 += \Delta T$}
    \RETURN $\x_c$
    \end{algorithmic}
    \end{algorithm}

\section{Results}
We evaluate the performance of ECADO by solving convex and nonconvex distributed optimization problems in the fields of machine learning and power systems. We compare our approach to state-of-the-art centralized distributed optimization algorithms that include centralized gradient descent (CGD) \cite{centralized_gradient_descent}, ADMM \cite{admm_boyd2011distributed} and DANE \cite{dane_shamir2014communication}, ECADO uses the weighting scheme provided by the sensitivity model, $\bar{R}_i^{th}$, and applies the adaptive B.E. scheme for faster convergence to the critical point. 

For ADMM, DANE, and central gradient-descent (CGD), we a use symmetric fastest
distributed linear averaging (FDLA) matrices \cite{xiao2004fast} for aggregating $\x_c$, and a constant step-size of $1e-4$. The machine-learning experiments and parameters are based on the code in \cite{yuejie_li2020communication}.

\subsection{Logistic Regression}
We used a regularized logistic regression to solve a binary classification of a synthetic \cite{yuejie_li2020communication} and a gisette dataset \cite{gisetteData}. The training datasets are split amongst $m=20$ agents along with a centralized agent to provide the consensus step. Each agent is optimizing the following loss function:
\begin{equation}
    f_i(\x) = \frac{1}{l} \sum_{j=1}^{l}\left [ b_i^{j} (\frac{1}{1+exp(\x^T a_i^j)}) (1-b_i^j)log(\frac{exp(\x^T a_i^j)}{ 1+exp(\x^T a_i^j)})  \right ] + \frac{\lambda}{2} \|\x\|^2 \label{eq:logistic_regression_loss}
\end{equation}
where $l$ is the size of each training set, $a_i^j,b_i^j\in{0,1}$ are samples from the training set for agent $i$ and $\lambda$ is a regularizer term.

We train each optimizer for the synthetic dataset provided in \cite{yuejie_li2020communication} as well as the gisette dataset \cite{gisetteData} to convergence (defined as $f(\x)=1-^{-10}$ for the synthetic dataset and $f(\x)=10^{-4}$ for the gisette dataset. Each agent receives $l=300$ training samples of dimension $d=5000$. 

The convergence of each optimizer is shown in Figure \ref{fig:logistic_regression_synthetic} for training on the synthetic dataset over the number of outer iterations and in Figure \ref{fig:logistic_regression_gisette} for training on the gisette dataset. ECADO demonstrated significantly improved convergence compared to the other optimizers. In this example, all sensitivity models, $R_i^{th}$, are identical, thereby reducing the effect of including an average sensitivity model. This corresponds to a sensitivity-based weighting scheme for which all agents have an identical weight. The improved convergence, therefore, would be attributed to the Backward-Euler approximation and the integration-based controller. 

\begin{figure}
    \centering
    \begin{minipage}[t]{.45\linewidth}
      \centering
      \label{fig:logistic_regression_synthetic}\includegraphics[width=0.8\columnwidth]{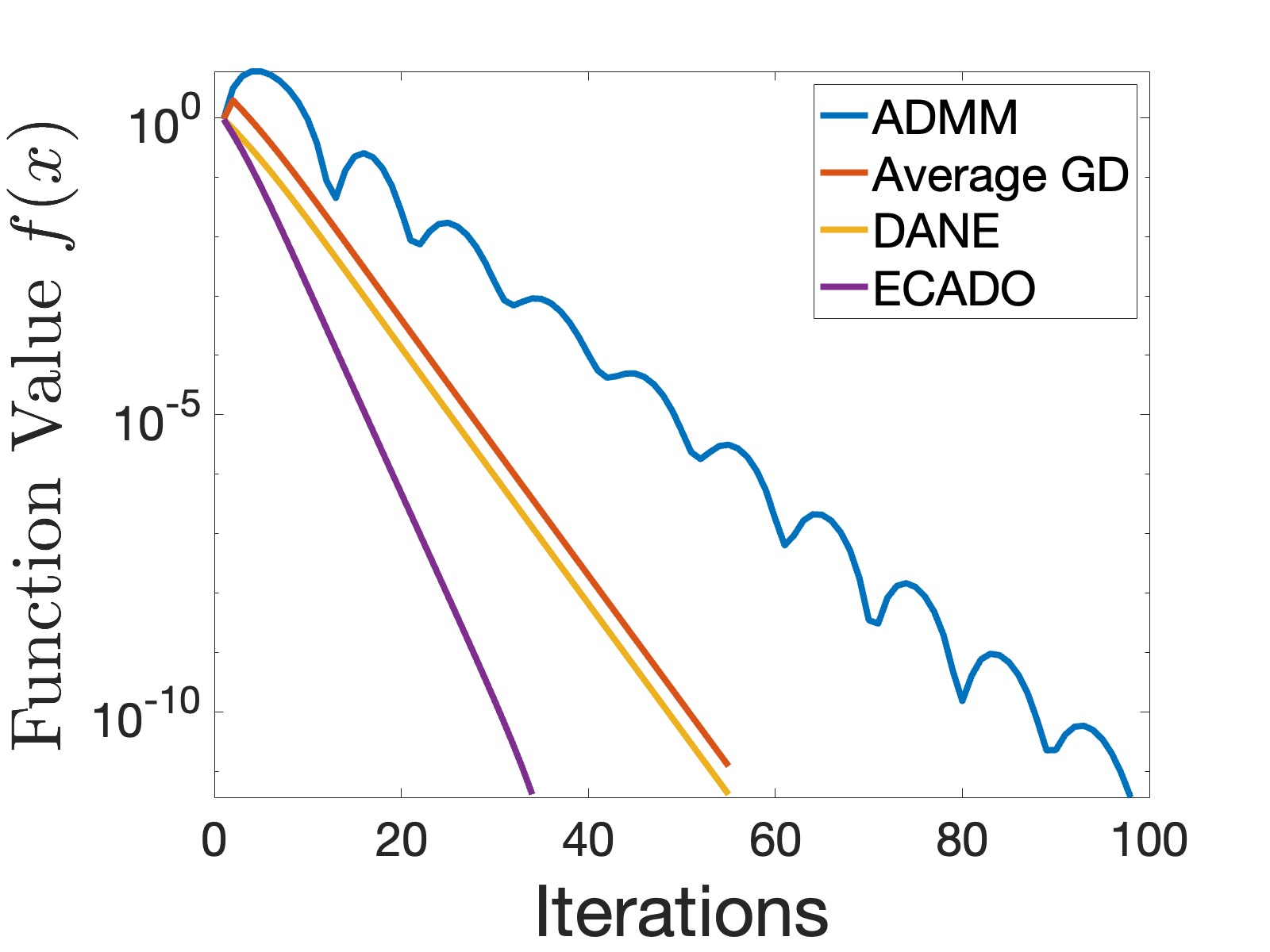}
      \caption{Convergence of Logistic Regression on Synthetic Dataset}
    \end{minipage}%
    \hfill
    \begin{minipage}[t]{.45\linewidth}
      \centering
      \label{fig:logistic_regression_gisette}\includegraphics[width=0.8\columnwidth]{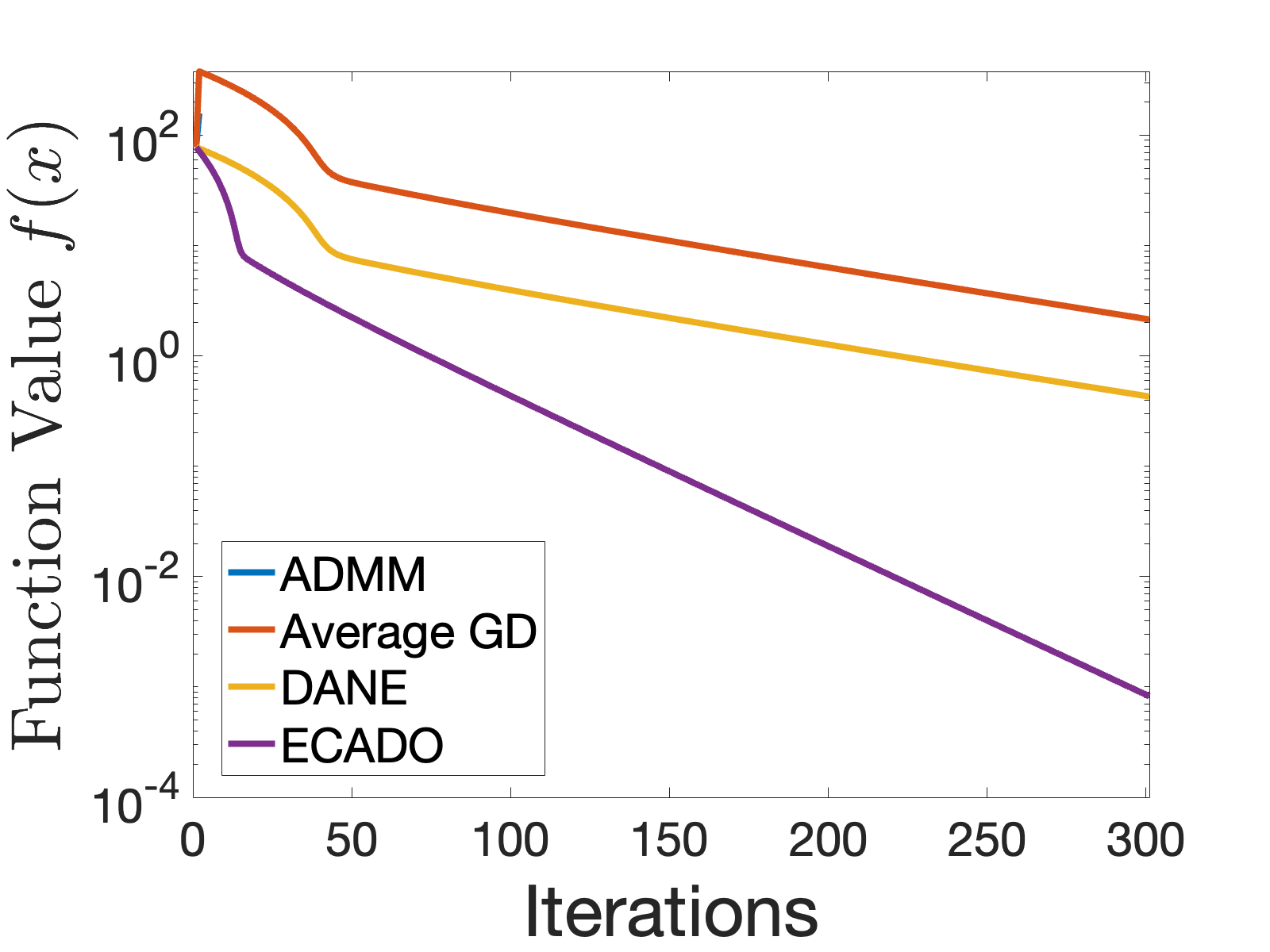}
      \caption{Convergence of Logistic Regression on Gisette Dataset}
    \end{minipage}
    \end{figure}

\subsection{Distributed Training Neural Network for Classifying MNIST Data}

ECADO provides guaranteed convergence for convex and nonconvex cases. We study the performance of our algorithm for a non-convex problem by training a 64-neuron neural network to classify MNIST data. This experiment is based on the work in \cite{yuejie_li2020communication}, which trains 60,000 samples across $n=20$ agents. 

The training loss after each central agent update is shown in Figure \ref{fig:nn}. ECADO again provides better convergence than the selected state-of-the-art methods. Since each sub-problem uses an identical $\bar{R}_i^{th}$, for all agents, $i\in[1,n]$, thereby decreasing the benefit of our sensitivity-based weighting scheme. But as with the convex example, ECADO results in faster convergence due to the B.E. steps and the integral controller.

\begin{figure}
    \centering
    \includegraphics[width=0.5\columnwidth]{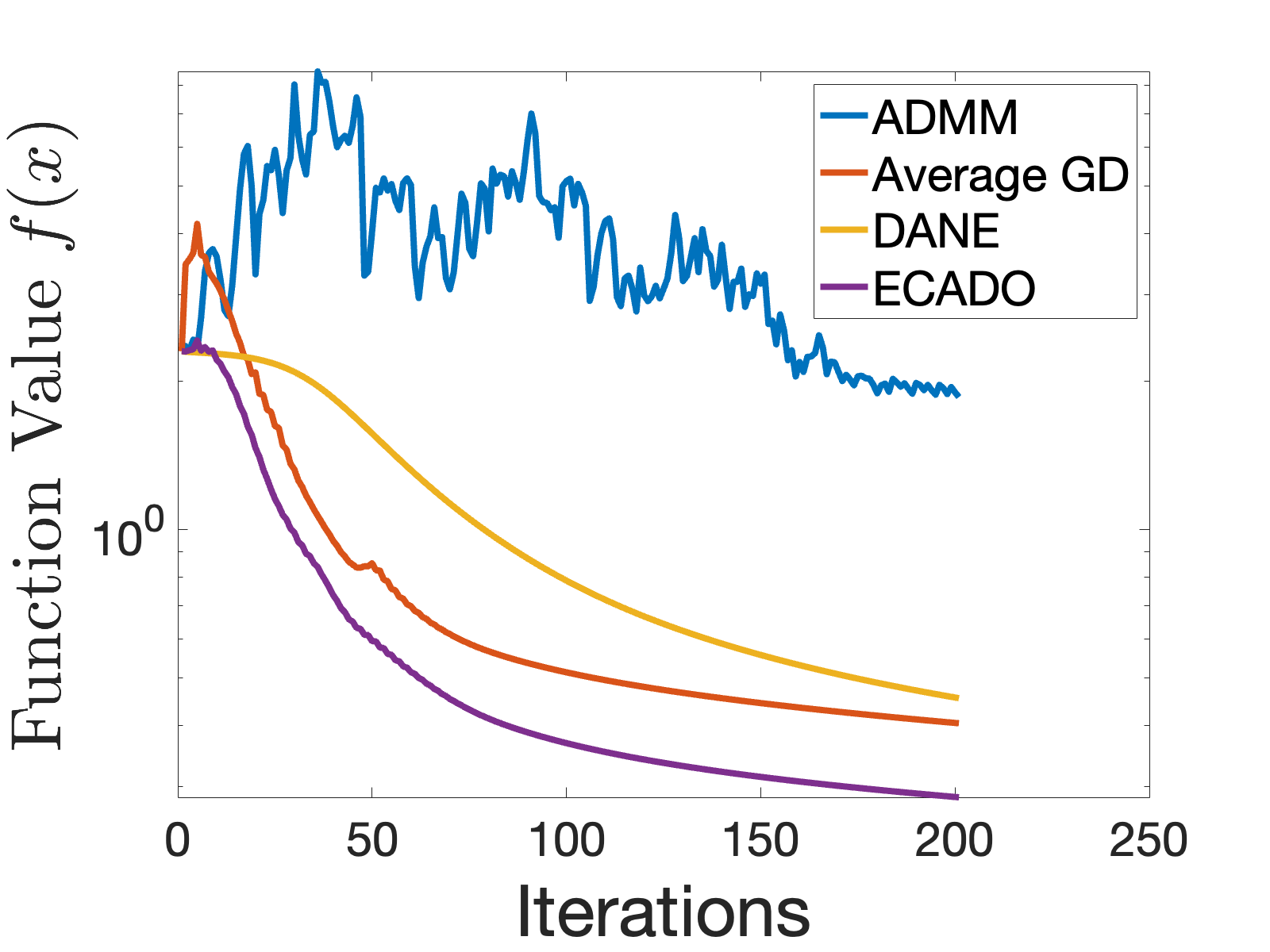}
    \caption{Distributed Training of a 3-Layer Neural Network for Classifying MNIST Data}
    \label{fig:nn}
\end{figure}

\subsection{Distributed Optimization for Security Constrained Optimal Power Flow}

Next, we use our distributed optimization algorithm to solve a large-scale power systems example, known as security contrained optimal power flow (SCOPF). SCOPF is a challenging, multimodal, nonconvex optimization problem designed to find a power grid dispatch, $\x$, (by setting the generator voltages and power generation) that minimize the cost of generation in the dispatched network as well as in $c\in[1,C]$ contingencies. The optimization problem is subject to nonconvex constraints that represent the network constraints of the base dispatch case as well as each ,$c_i$, contingency (possible failures that could occur but for which the grid should remain functional). The optimization problem is described as:
\begin{subequations} \label{eq:scopf}
    \begin{equation}
    \min_{\x} f_b(\x) + \sum_{c=1}^{C}f_c(\x)
    \end{equation}
    s.t.
    \begin{equation}
    g_b(\x) =0 
    \end{equation}
    \begin{equation}
   g_c(\x) =0 \forall c\in [1,C]
    \end{equation}
    \begin{equation}
        h_b(\x) \leq 0
    \end{equation}
    \begin{equation}
        h_c(\x) \leq 0 \forall c \in[1,C]
    \end{equation}
\end{subequations} 

In this optimization problem, $f_b$ represents the cost of generation for the base network, while $f_c$ represents the cost of generation for each contingency. The set of equality constraints, $g_b$ and $g_c$, represent nonlinear network constraints for the base case and contingencies respectively. The nonlinear inequality constraints, $h_b$ and $h_c$, are operational bounds of devices in the grid for the base network and contingencies. Further details on the defining the optimization variables are provided in \cite{go2}.

The challenge with solving \eqref{eq:scopf} is the high-dimensionality and nonconvex constraints caused by the contingency set. The distributed optimization paradigm provides a scalable methodology to solve SCOPF for large networks, whereby the optimization relating to each network $[b,c] \;\; \forall c \in [1,C]$ is optimized on a separate compute node. The centralized agent then provides a consensus on the dispatch decision, $\x$. Additionally, each network has a different topology, thereby incentivizing a weighting scheme that accounts for the variation in the underlying optimization problem. The optimization problem for each network case can be described as:
\begin{subequations} \label{eq:scopf}
    \begin{equation}
    \min_{\x} f_c(\x)
    \end{equation}
    s.t.
    \begin{equation}
   g_c(\x) =0 
    \end{equation}
    \begin{equation}
        h_c(\x) 
    \end{equation}
\end{subequations} 
Each agent solves \eqref{eq:scopf} by constructing a Lagrange $\mathcal{L}_c$ and forming the KKT conditions that are then solved using a Newton-Raphson method. In the context of gradient-flow, this is identical to setting $Z_i=\nabla^2 \mathcal{L}_c$ and applying a Forward-Euler integration with a fixed step size of $\Delta t$. Details of solving each agent is provided in \cite{imb}.

We applied ECADO to a 118-bus SCOPF problem with parameters provided by \cite{go2} and a modified contingency set of 1000 contingencies. The dimensionality of this problem is $d=291,000$. The contingencies are distributed across 40 compute nodes, which each solve 25 contingency networks serially. The updates for each contingency and base network are then communicated to a central compute node that takes a consensus step. The total loss for this optimization problem after each consensus step is shown in Figure \ref{fig:scopf}.

After 20 rounds of of central agent updates, we see that ECADO outperforms all the centralized distributed optimization algorithms. Note that the total network cost in Figure \ref{fig:scopf} is multiplied by $10^6$. The inner optimization of each contingency is performed using the Newton-Raphson-based solver in \cite{imb} that communicates the updates to a central agent. The major benefit of ECADO in this experiment is the precomputed $\bar{R_i}^{th}$ for flat start conditions (where $\x_i=1$). Due to varying tolologies, each contingency network produces a different $\bar{R}_i^{th}$ value, thereby adding variance to the weighting process. 

\begin{figure}
    \centering
\includegraphics[width=0.5\columnwidth]{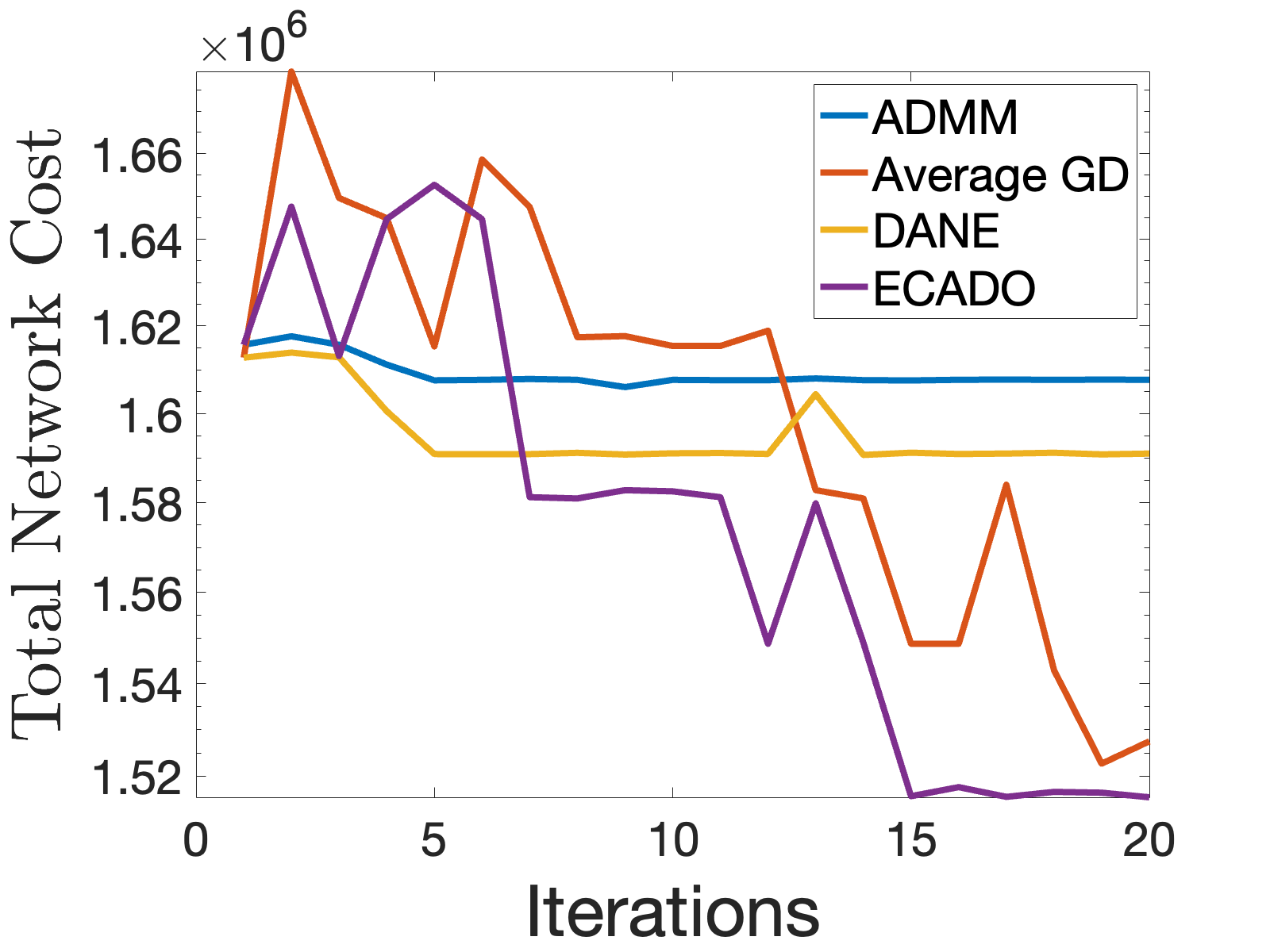}
    \caption{Loss on Distributed Optimization of SCOPF for modified IEEE 118-Bus System}
    \label{fig:scopf}
\end{figure}

\section{Conclusion}

We introduced a new distributed optimization method, ECADO, that was inspired by an equivalent circuit representation of the distributed optimization problem. ECADO uses circuit insights to provide a sensitivity-based weighting scheme and step-size selection based on Backward-Euler numerical integration. Experimentally, ECADO demonstrates faster convergence compared to ADMM, centralized gradient descent, and DANE for a range of convex and non-convex problems that included: solving distributed logistic regression, training a deep neural network, and optimizing high-dimensional security constrained optimal power flow. 

\newpage
\appendix

\section{EC Model of the Distributed Optimization Problem}

ECADO develops its methods using insights from an equivalent circuit (EC) model of the scaled gradient-flow problem in \eqref{eq:scaled_gd_flow}. The EC represents the continuous-time trajectory of the optimization variable, $\x(t)$, as the transient response of node voltages. For a multi-dimensional optimization problem, $\x\in\R^n$, the EC model is composed of $n$ coupled circuits where each circuit, shown in Figure \ref{fig:ec_model}, models the dynamics of a single variable of $\x(t)$. 

The EC is composed of two circuit elements: a nonlinear capacitor with capacitance of $Z(\x)$ and a voltage controlled current source (VCCS) which produces a current equal to $\nabla f (\x)$. Using Kirchhoff's current law (KCL), the behavior of the EC is expressed as:
\begin{equation}
Z(\x)\dot{\x}(t) + \nabla f(\x) =0,
\end{equation}
which is identical to the scaled-gradient flow \eqref{eq:scaled_gd_flow}. The EC reaches a steady-state when the current through the capacitor is equal to zero ($I_c = Z(\x)\dot{\x}$), at which point, we reach a node voltage, $\x^*$, where
\begin{equation}
    \nabla f(\x^*) =0,
\end{equation}
which is in the set of critical points, $S=\{\x^*|\nabla f(\x^*)=0\}$. 

\begin{figure}
    \centering
    \includegraphics[width=0.5\columnwidth]{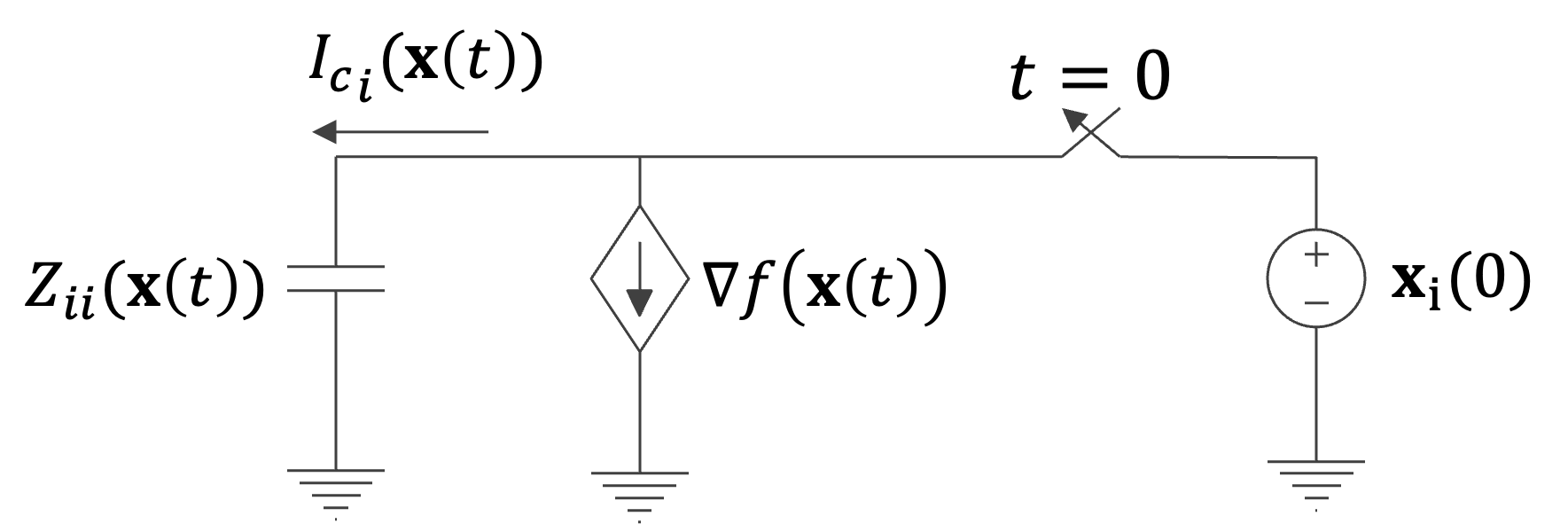}
    \caption{Equivalent Circuit Model of Scaled Gradient Flow \eqref{eq:scaled_gd_flow}}
    \label{fig:ec_model}
\end{figure}

The EC model for the separable scaled-gradient flow \eqref{eq:gd_flow_distributed}, shown in Figure \ref{fig:ec_distributed},  has $m$ VCCS sources in parallel, each producing a current of $\nabla f_{i}(\x)$. The KCL equation characterizing this circuit model is:
\begin{equation}
    Z(\x)\dot{\x} + \sum_{i=1}^{m} \nabla f_{i}(\x) = 0,
\end{equation}
which is identical to the scaled gradient flow of the distributed optimization problem in \eqref{eq:gd_flow_distributed}.

\begin{figure}
    \centering
\includegraphics[width=0.7\columnwidth]{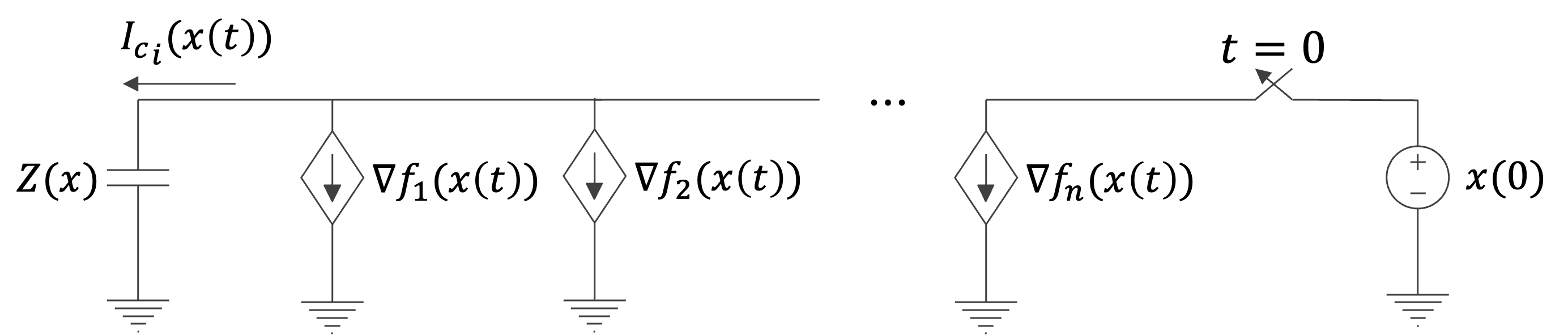}
    \caption{Equivalent Circuit Model of Distributed Optimization Scaled Gradient Flow}
    \label{fig:ec_distributed}
\end{figure}

The EC model in Figure \ref{fig:ec_distributed} is not well-suited for distributed simulation since all VCCS elements share a common node-voltage, $\x$. We design a partitioning scheme using insights from the circuit model to separate the node voltage into $m+1$ vectors.


\subsection{Separating the Capacitances} \label{appendix:separating_caps}
The first step in the partitioning scheme separates the nonlinear capacitance, $Z(\x)$, into $m+1$ parallel capacitors:
\begin{equation}
    Z(\x) = Z_c(\x) + \sum_{i=1}^{m} Z_i(\x).
\end{equation}
 The capacitor, $Z_i(\x)$, is associated to each VCCS element, $\nabla f_i(\x)$, and $Z_c$ is a capacitor for the central agent node. This modification draws inspiration from a fundamental circuit principle stating that the total capacitance of capacitors connected in parallel is equal to the sum of their individual capacitances. The resulting EC model is shown in Figure \ref{fig:ec_distributed_caps}.
 

\begin{figure}
    \centering
\includegraphics[width=0.9\columnwidth]{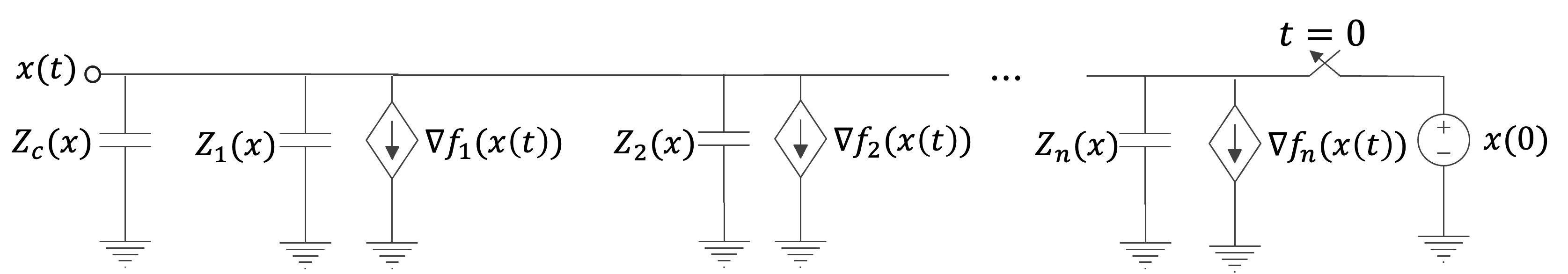}
    \caption{Modifying Capacitors for Equivalent Circuit Model of Distributed Optimization}
    \label{fig:ec_distributed_caps}
\end{figure}

Furthermore, separating $Z(\x)$ into parallel capacitors does not affect the steady-state of the EC. At the DC steady-state ($\x^*$), the capacitors in the EC look like an open circuit (i.e., with zero current flowing) described as:
\begin{align}
    Z_i(\x*)\dot{\x^*}=0, \;\; \forall i\in[1,m] \\ 
    Z_c(\x^*)\dot{\x^*}=0.
\end{align}
By KCL, this implies that the sum of the VCCS currents equals zero, $\sum_{i=1}^{m}\nabla f_{i}(\x^*) =0 $, for which the solution is in the set $S$.


\subsection{Adding a small Inductor}
\label{appendix:adding_inductor}
The second step in the paritioning scheme introduces an inductor between a central agent node, $\x_c$, and each VCCS element to define a new node, $\x_i\; \forall i\in[1,m]$, as shown in Figure \ref{fig:ec_distributed_inductors}. This step maps the flow variable, $I_i^L$, introduced in Section \ref{sec:partitioning_ec}, to a physical inductor model with an inductance of $L$, which produces a current, $I_i^L$, according to:
\begin{equation}
    L \dot{I}_i^L = \x_c - \x_i.
\end{equation}

\begin{figure}
    \centering
    \includegraphics[width=0.5\columnwidth]{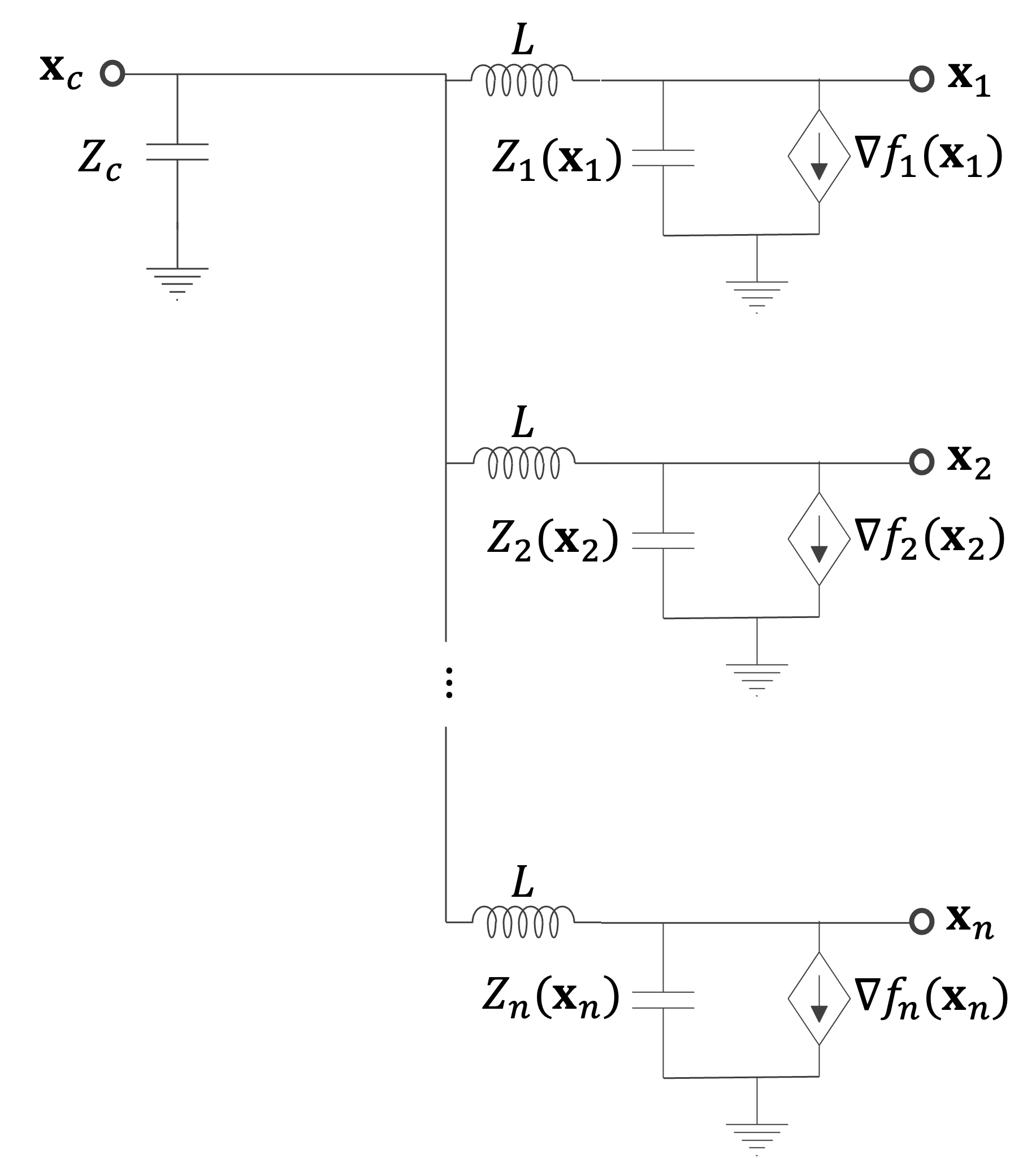}
    \caption{Modifying Equivalent Circuit Model for Distributed Optimization by Adding Series Inductors}
    \label{fig:ec_distributed_inductors}
\end{figure}

The inductors separate the EC nodes so that each VCCS element, $\nabla f_i (\x_i)$, and nonlinear capacitor, $Z_i(\x_i)$, is controlled by a local node-voltage, $\x_i$. Additionally, each sub-circuit is coupled to a central agent node, $\x_c$, through the inductor current. This EC model is better suited for distributed optimization, as each VCCS and capacitance matrix can be defined by a local node-voltage. 

Note, the addition of the inductors does not affect the steady-state solution of the EC. At the DC steady-state of the EC, the inductors are effectively shorted, $L \dot{I}_i^L=0$, while the capacitors are open; this defines the steady-state as:
\begin{align}
    \x_c -\x_i =0 \;\; \forall i \in [1,m] \\
    \nabla f_i(\x_i)=0 \;\; \forall i\in[1,m]\\
    \implies \sum_{i=1}^{m} \nabla f(\x_c)=0,
\end{align}
 which is equivalent to the steady-state condition of the scaled gradient-flow.

\section{Solving the Equivalent Circuit Model Using Gauss-Seidel} \label{sec:partitioning_ec}

To solve the EC model in Figure \ref{fig:ec_distributed_inductors}, we distribute the computation using a Gauss-Seidel method (Algorithm \ref{basic-gauss-seidel}) that divides the circuit into $m+1$ sub-circuits wherein each sub-problem EC, shown in Figure \ref{fig:ec_subcircuit}, is decoupled from the central agent EC depicted in Figure \ref{fig:partition_circuit}. In each G-S iteration, the EC representing the sub-problem solves for the node-voltage, $\x_i^{k+1}$, by utilizing a constant current source to model the inductor current, $I_i^{L^k}$. 
The central agent then uses a constant voltage source with a voltage of  $\x_i^{k+1}$ to represent each sub-problem EC and solve  for the updated inductor currents, $I_i^{L^{k+1}}$, and node-voltage, $\x_c^{k+1}$. At the end of the G-S iteration, the updated values for the inductor currents are communicated back to the sub-circuits in Figure \ref{fig:ec_subcircuit}.  Analyzing the G-S iterations as an exchange of voltages and currents in the EC model allows us to develop circuit-inspired methods to improve the convergence of the G-S algorithm.

Namely, ECADO improves solving the central agent EC by modeling each sub-circuit using a Thevenin model consisting of a Thevenin resistance, $R_i^{th}$, and Thevenin voltage source, $\x_i^{th}$. The Thevenin resistance models the linearized sensitivity looking into each sub-circuit as a one-port resistance to provide a more accurate representation of each sub-EC. The Thevenin resistance is calculated in Appendix \ref{sec:deriving_rth} and implemented into the G-S in Algorithm \ref{full-gauss-seidel}.


\begin{figure}
    \centering
    \includegraphics[width=0.5\columnwidth]{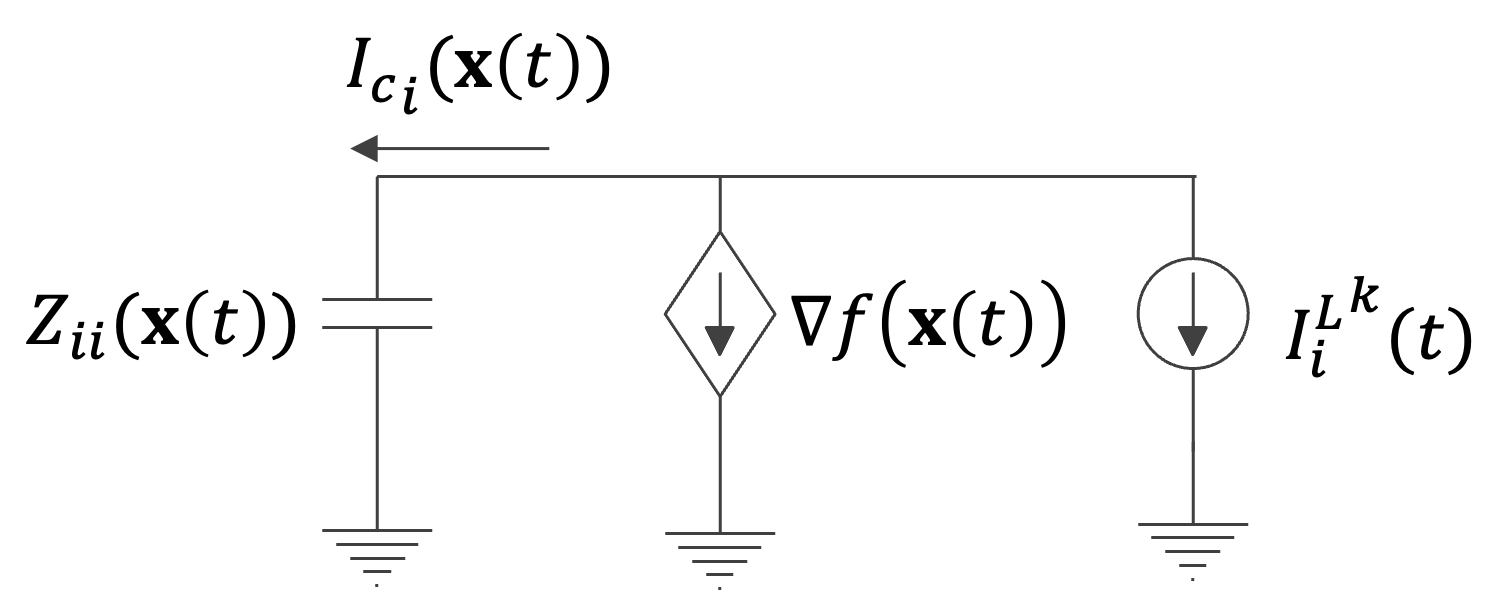}
    \caption{The partitioned equivalent circuit model for each sub-problem}
    \label{fig:ec_subcircuit}
\end{figure}

\begin{figure}
    \centering
    \begin{minipage}[t]{.45\linewidth}
      \centering
      \includegraphics[width=0.7\columnwidth]{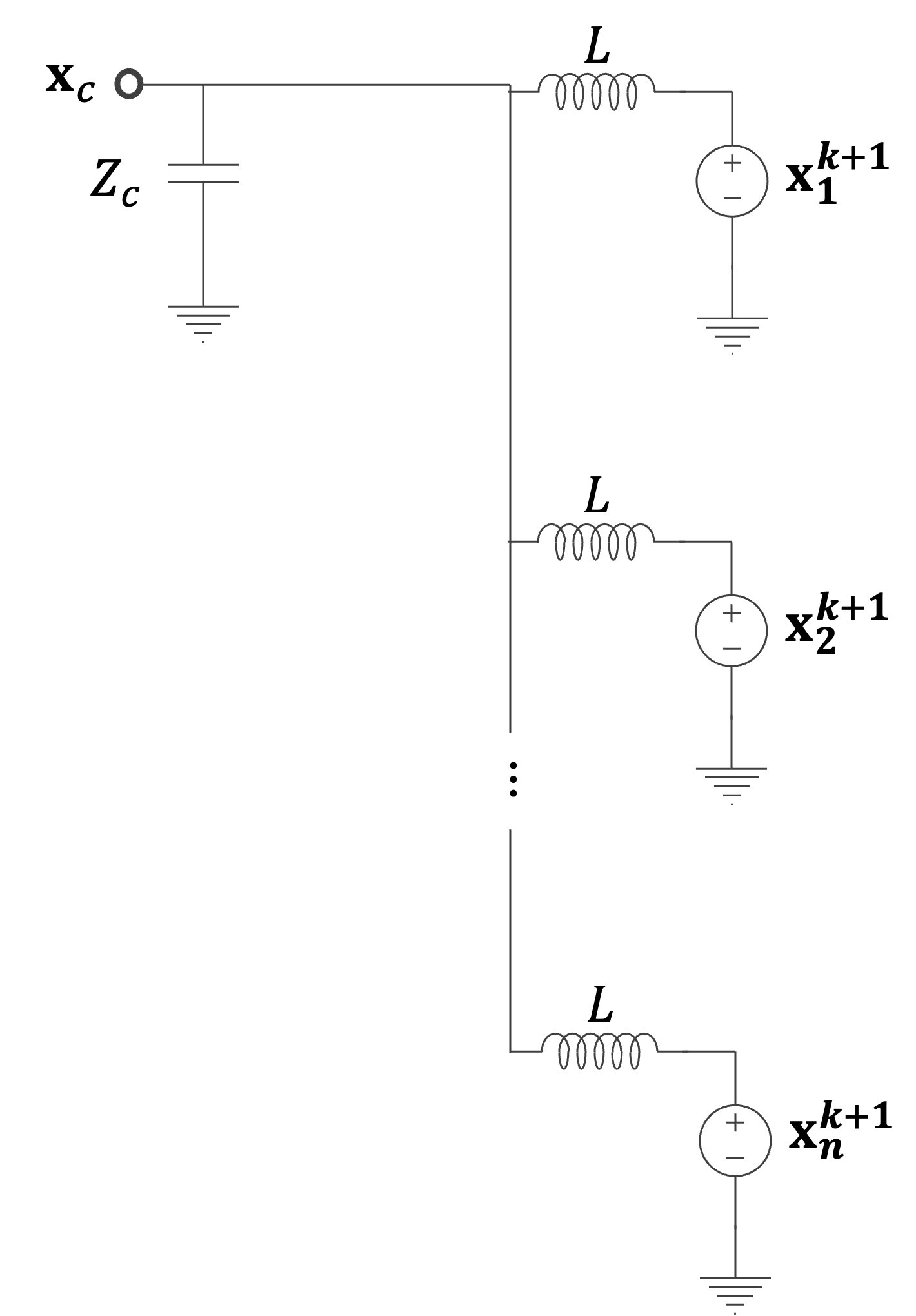}
      \caption{The partitioned equivalent circuit model for the central agent ODE \eqref{eq:gd_flow_inductor_cenrtal},\eqref{eq:gd_flow_inductor_inductor}}
  \label{fig:partition_circuit_subproblem}
    \end{minipage}%
    \hfill
    \begin{minipage}[t]{.45\linewidth}
      \centering
      \includegraphics[width=0.8\columnwidth]{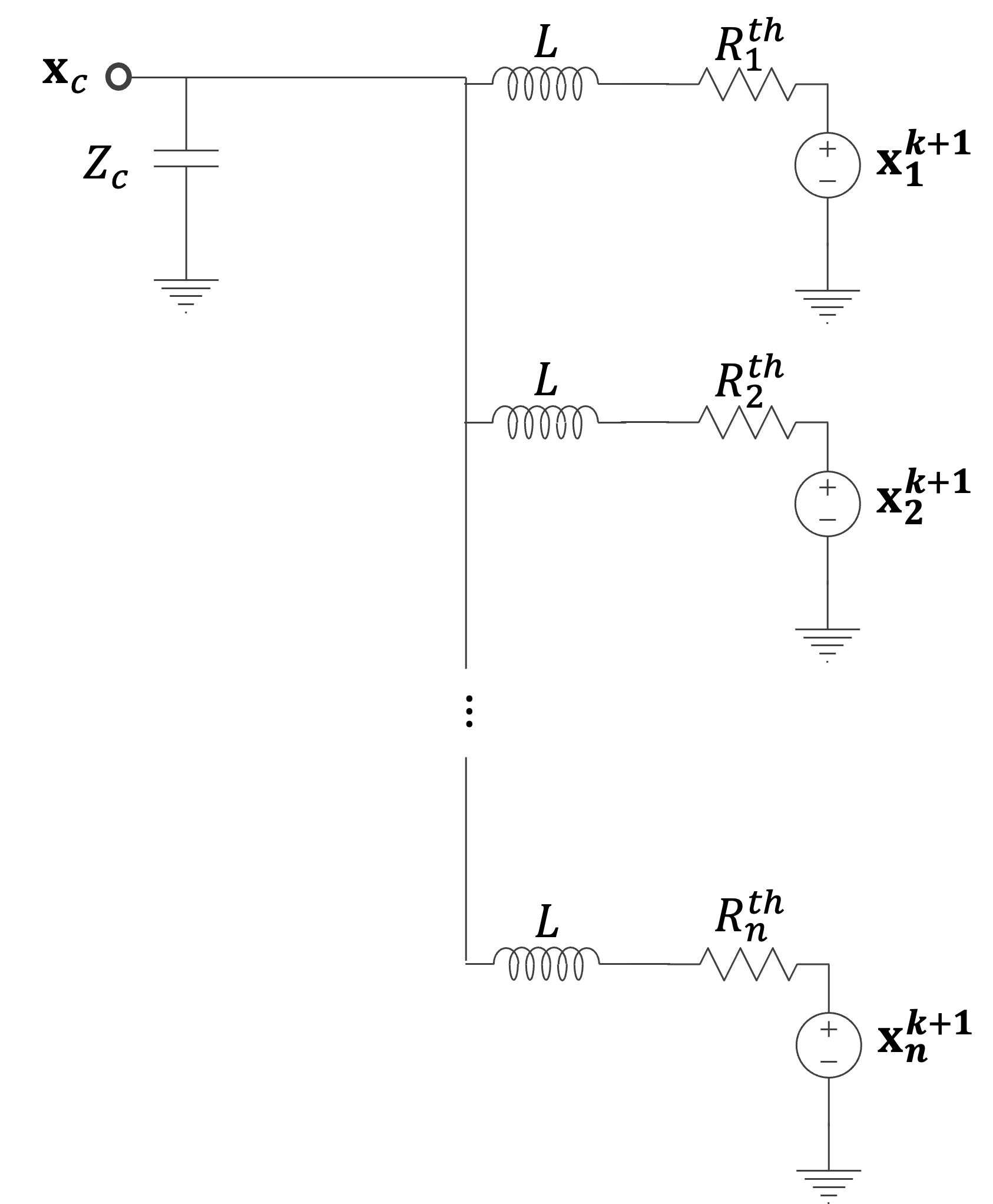}
      \caption{The partitioned equivalent circuit model for the central agent ODE \eqref{eq:central_chord_ode} with a Thevenin Equivalent model}
      \label{fig:partition_circuit}
    \end{minipage}
\end{figure}



\section{Deriving the Linear Sensitivity}
\label{sec:deriving_rth}
The linear sensitivity model, $R_i^{th}$, measures the sensitivity of the local node, $\x_i$, to a perturbation in the inductor current, $I_i^L$. The change in the node-voltage, $\Delta \x_i$, due to a perturbation in the inductor current, $\Delta I_i^L$, can be calculated as:
\begin{align}
    Z_i(\x_i + \Delta \x_i) (\dot{\x}_i + \dot{\Delta \x}_i) + \nabla f_i(\x_i + \Delta \x_i) - (I_i^L + \Delta I_i^L) = 0 \\
\end{align}
We approximate the sensitivity using the first two terms of the Taylor series:
\begin{equation}
    (Z_i(\x_i) + \frac{\partial Z_i(\x_i)}{\partial \x_i}\Delta \x_i)(\dot{\x_i} + \Delta \dot{\x_i}) + \nabla f_i(\x_i) +  \frac{\partial \nabla f_i(\x_i)}{\partial \x_i}\Delta \x_i - I_i^L - \Delta I_i^L = 0. \label{eq:Rth_derivation_1}
\end{equation}
which, knowing that $Z_i(\x_i) + \nabla f_i(\x_i) - I_i^L=0$, is reduced to:
\begin{equation}
         (Z_i(\x_i) + \frac{\partial Z_i(\x_i)}{\partial \x_i}\Delta \x_i) \Delta \dot{\x_i} + \frac{\partial Z_i(\x_i)}{\partial \x_i}\dot{\x_i}+  \frac{\partial \nabla f_i(\x_i)}{\partial \x_i}\Delta \x_i - \Delta I_i^L = 0. \label{eq:Rth_derivation_2}
\end{equation}

To simplify the derivation, we assume a fixed capacitance model, i.e., $\frac{\partial Z_i(\x_i)}{\partial \x_i}=0$, to define a linear sensitivity equation:
\begin{equation}
             Z_i(\x_i)\Delta\dot{\x}_i +  \frac{\partial \nabla f_i(\x_i)}{\partial \x_i}\Delta \x_i - \Delta I_i^L = 0. \label{eq:Rth_derivation_3}.
\end{equation}

The ODE above is then solved using a Backward-Euler integration:
\begin{equation}
    \Delta \x_i (t+\Delta t) = \Delta \x_i(t) +\Delta t Z_i^{-1} \left(  I_i^L(t+\Delta t) - \frac{\partial \nabla f_i(\x_i)}{\partial \x_i}\Delta \x_i(t+\Delta t) \right)
\end{equation}
from which we can derive $R_i^{th}=\Delta x_i / \Delta I_i^L$ as :
\begin{equation}
    R_i^{th} = \frac{\Delta \x_i}{\Delta I_i^L} = (\frac{Z_i}{\Delta t}+\frac{\partial \nabla f_i(\x_i)}{\partial \x_i})^{-1}
\end{equation}

\section{Proof of Convergence}
\label{sec:convergence_proof_appendix}

To prove convergence of G-S, we abstract the system of equations \eqref{eq:ec_ode_matrix} as:
\begin{equation}
    C(X)\dot{X}(t) = \hat{f}(X),
\end{equation}
where $X = [\x_1,\x_2,\ldots, \x_n, I_1^L, I_2^L, \ldots, I_n^L, \x_c]$. The matrix $C(X)$ and vector $\hat{f}(X)$ are defined as:
\begin{equation}
    C(X)=\begin{bmatrix}
        Z_1(\x_1) &0 & \ldots & 0 & 0 & \ldots & 0\\
        0 & Z_2(\x_2) &\ldots & 0 & 0 &\ldots &0\\
        0 & 0 & \ddots & 0 & 0 &\ldots & 0 \\
        0 & 0 & \ldots & L & 0 & \ldots & 0\\
        0 &0 & \ldots & 0 & L & \ldots & 0\\
        0 &0 & \ldots & 0 & 0 & \ddots & 0\\
        0 &0 & \ldots & 0 & 0 & \ldots & Z_c
        \end{bmatrix}, \;\;\; \hat{f}(X) = 
        \begin{bmatrix}
            0 & 0 & \ldots & 1 & 0 & \ldots & 0\\
            0 & 0 & \ldots & 0 & 1 &\ldots & 0\\
            0 & 0 & \ddots &0 & 0 & \ddots & 0 \\
            -1 & 0 & \ldots & -\bar{R}_1^{th} & 0 & \ldots &1\\
            0 & -1 & \ldots & 0 & -\bar{R}_2^{th} & \ldots & 1 \\
            0 & 0 & \ddots & 0 & 0 &\ddots & 1\\
            0 & 0 & \ldots &-1 & -1 & \ldots &0
        \end{bmatrix}
        \begin{bmatrix}
            \x_1 \\
            \x_2 \\
            \vdots \\
            I_1^L \\
            I_2^L \\
            \vdots \\
            \x_c
        \end{bmatrix} + \begin{bmatrix}
            -\nabla f_1(\x_1) \\
            -\nabla f_1(\x_2) \\ 
            \vdots \\
            I_1^{L^k} \bar{R}_1^{th}\\
            I_2^{L^k} \bar{R}_2^{th} \\
            \vdots \\
            0
        \end{bmatrix}.
\end{equation}

 $C(X)$ is defined as a positive diagonal-matrix, which is a sufficient condition to prove convergence.

 \begin{theorem}
    \label{thm:convergence}
    Given a positive diagonal-matrix, $C(X)$, the Gauss-Seidel sequence $\{\dot{X}^k\}$  converges to the fixed point of \eqref{eq:ec_ode_matrix}, where each iteration satisfies:
    \begin{equation}
        \| \dot{X}^{k+1} - \dot{X}\| \leq \alpha \| \dot{X}^k - \dot{X}\|,
    \end{equation}
    for $\alpha < 1$.
 \end{theorem}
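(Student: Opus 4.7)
The plan is to establish the contraction of the Gauss--Seidel iteration by analyzing a linearized update matrix and showing its spectral radius is strictly below one via strict diagonal dominance of $C(X)$. First I would apply a Backward--Euler step to the ODE $C(X)\dot{X}=\hat{f}(X)$ and write the implicit update at a fixed time window as a linear system $(C - \Delta t\, J)\,X^{k+1} = C\,X^{k} + \Delta t\, r$, where $J$ is the Jacobian of $\hat{f}$ evaluated near the iterate and $r$ collects lagged source terms (e.g.\ $I_i^{L^k}\bar{R}_i^{th}$ and $-\nabla f_i$ evaluated at $\x_i^k$). Splitting $J = D_J + L_J + U_J$ into block diagonal, strict block lower, and strict block upper parts mirroring the structure of \eqref{eq:ec_ode_matrix} then exposes the standard Gauss--Seidel iteration matrix $G_{ec} = (C - \Delta t\, D_J - \Delta t\, L_J)^{-1}(\Delta t\, U_J)$.

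Second, I would exploit that $C(X)$ is positive diagonal (entries bounded below by the constraint \eqref{a5} combined with $L>0$) to show strict diagonal dominance of $C - \Delta t(D_J + L_J)$ for any sufficiently small $\Delta t$. Concretely, each diagonal entry is $C_{ii} + \mathcal{O}(\Delta t)$ while every off-diagonal entry is $\mathcal{O}(\Delta t)$, and under assumption \ref{a4} the Jacobian blocks satisfy a uniform bound $\|J\| \le L'$ for some constant independent of the iterate. Hence there exists a threshold $\Delta t^{\star}>0$ such that for all $\Delta t\in(0,\Delta t^{\star}]$ the matrix $C - \Delta t(D_J + L_J)$ is strictly row-dominant, and the classical Gauss--Seidel theorem cited as \cite{alberto_numerical_methods} yields $\rho(G_{ec}) < 1$. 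Setting $\alpha := \rho(G_{ec})$ then gives the desired contraction on $X$.

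Third, I would transfer the contraction from $X$ to $\dot{X}$ using the relation $\dot{X} = C(X)^{-1}\hat{f}(X)$. Since $C^{-1}$ is bounded by \ref{a5} and $\hat{f}$ is Lipschitz by \ref{a4}, one has $\|\dot{X}^{k+1} - \dot{X}\| \le K\|X^{k+1} - X\|$ and $\|X^{k+1} - X\| \le K'\|\dot{X}^{k+1}-\dot{X}\|$ for positive constants $K,K'$, so the contraction on the primal variable passes to the derivative sequence with a possibly rescaled but still sub-unital rate $\alpha$. Iterating this bound yields convergence of $\{\dot{X}^k\}$ to the fixed point of \eqref{eq:ec_ode_matrix}.

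The main obstacle will be handling the nonlinearity rigorously. The Jacobian $J$ and hence the iteration matrix $G_{ec}$ vary with the iterate $X^k$, so a linear-system convergence proof does not immediately apply. I would resolve this by choosing $\Delta t^{\star}$ uniformly against the global Lipschitz constant in \ref{a4}, which ensures the diagonal-dominance margin, and therefore $\alpha$, can be taken independently of $k$. A secondary subtlety is that the aggregate sensitivity $\bar{R}_i^{th}$ enters the diagonal blocks of the inductor rows through $-L^{-1}\bar{R}_i^{th}$; by the construction \eqref{eq:average_thevenin_resistance}, $\bar{R}_i^{th}$ is positive definite and thus actually strengthens the diagonal dominance rather than threatening it, so the bound carries through.
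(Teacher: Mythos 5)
Your proposal takes a genuinely different route from the paper. The paper's proof is a waveform-relaxation argument carried out in continuous time: it splits the mass matrix $C(X)$ itself into $D+L-U$, writes the Gauss--Seidel iteration directly on the derivative waveforms as $(L^{k+1}+D^{k+1})\dot{X}^{k+1}-U^{k+1}\dot{X}^{k}=\hat{f}(X^{k+1},X^{k})$, bounds $\|(D+L)^{-1}U\|_{\infty}<1$ via strict diagonal dominance, absorbs the Lipschitz terms in $\hat{f}$ using a function-space norm lemma (Lemma~\ref{lm:norm_ineq}, which trades the state terms for initial-condition terms that cancel), and closes with the Banach contraction mapping lemma. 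You instead discretize first with Backward--Euler, split the Jacobian of $\hat{f}$ rather than $C$, and run a finite-dimensional spectral-radius argument on the per-time-step linear system, then transfer the contraction to $\dot{X}$. Your route is more elementary and makes the role of $\Delta t$ explicit (which the paper's main-text theorem gestures at but its appendix proof never uses), at the cost of only proving contraction for the linearized/discretized iteration; the paper's function-space argument handles the nonlinearity of $\hat{f}$ directly and is agnostic to the integration scheme.

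Two steps need repair. First, strict diagonal dominance of $M=C-\Delta t(D_J+L_J)$ alone does not give $\rho(M^{-1}N)<1$; the classical theorem requires dominance of the full matrix $C-\Delta t J$, so that the off-diagonal mass in $N=\Delta t\,U_J$ is also controlled. Your own $\mathcal{O}(\Delta t)$ estimate in fact gives dominance of the full matrix under assumption~\ref{a4}, so the fix is to state the condition on $C-\Delta t J$ rather than on $M$. Second, and more substantively, the transfer from contraction in $X$ to the claimed inequality $\|\dot{X}^{k+1}-\dot{X}\|\leq\alpha\|\dot{X}^{k}-\dot{X}\|$ requires the reverse bound $\|X^{k}-X^{*}\|\leq K'\|\dot{X}^{k}-\dot{X}^{*}\|$, i.e.\ an inverse-Lipschitz property of $X\mapsto C(X)^{-1}\hat{f}(X)$, which you assert but do not justify and which does not follow from \ref{a4} or \ref{a5} globally (it is a lower bound on the map, not an upper bound). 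Without it you still obtain convergence of $\{\dot{X}^{k}\}$, but not the per-iteration contraction in the derivative norm that the theorem literally states; the paper avoids this issue by working with $\dot{X}$ as the iterated object from the outset.
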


\begin{proof}
    We prove that the Gauss-Seidel converges to a fixed point for the general case of a strictly-diagonally dominant matrix, $C(X)$, for which ECADO is a special case. This proof is reconstructed from \cite{white2012relaxation}.

    Let us decompose the matrix, $C(X)$, into diagonal $D(X)$, upper triangular, $U(X)$, and lower triangular, $L(X)$, components as:
    \begin{equation}
        C(X) = D(X)+L(X)-U(X).
    \end{equation}

        Assuming strict diagonal dominance, we conclude that:
        \begin{equation}
            \| (D(X) + L(X))^{-1}U(X) \|_{\infty} < 1, \label{eq:diagonal_dominance_matrix_ineq}
        \end{equation}
        which is proven using standard matrix theory \cite{white2012relaxation}.

        The proof for Theorem \ref{thm:convergence} requires the following two lemmas, which are provide in \cite{white2012relaxation}.
        \begin{lemma}
            Let $Y$ be a Banach space and $F: Y\rightarrow Y$. If $\|F(a) - F(b)\| \leq \gamma \|a-b\| \forall [a,b] \in Y, \gamma \in [0,1)$, then there exists a fixed point, $\bar{y}$ such that $F(\bar{y}) = \bar{y}$.  \label{lemma:contraction}
        \end{lemma}

        \begin{lemma}
            \label{lm:norm_ineq}
            For two vectors, $W,Z \in \Re^{p}$, where $p$ is the dimension of $X$, if there exists a relation: 
            \begin{equation}
                \|\dot{W}(t)\| \leq \gamma \|\dot{Z}\| + l_1 \|W(t)\| + l_2 \|Z(t)\|,
            \end{equation}
            for positive scalars $l_1,l_2 < \inf$ and $\gamma<1$, then we can conclude there is a norm $\| \|_b$ that:
            \begin{equation}
                \|\dot{W}(t)\|_b \leq \alpha \| \dot{Z}\|_b + l_1 \|W(0)\| + l_2 \|Z(0)\|.
            \end{equation}
        \end{lemma}

        The G-S process generates a sequence of vectors, $\{X^{k}\}$, where at iteration $k+1$, the matrix, $C(X)$, is decomposed into:
        \begin{equation}
            C(X^{k+1}) = D^{k+1} + L^{k+1} - U^{k+1}.
        \end{equation}
        In matrix form, the G-S iteration is defined as:
        \begin{equation}
            (L^{k+1} + D^{k+1})\dot{X}^{k+1} - U^{k+1}\dot{X}^k = \hat{f}(X^{k+1}, X^{k}),
        \end{equation}
        which solves for the state vector, $\dot{X}$, as:
        \begin{equation}
            \dot{X}^{k+1} = (L^{k+1} + D^{k+1})^{-1}(U^{k+1}\dot{X}^k + \hat{f}(X^{k+1}, X^{k})).
        \end{equation}       

        To prove convergence to a fixed point, we analyze the difference between $k+1$ iteration and $j+1$ iteration:
        
        \begin{multline}
            \dot{X}^{k+1} - \dot{X}^{j+1} = (L^{k+1} + D^{k+1})^{-1}U^{k+1}\dot{X}^k - (L^{j+1} + D^{j+1})^{-1}U^{j+1}\dot{X}^j  \\ + (L^{k+1} + D^{k+1})^{-1}\hat{f}(X^{k+1}, X^k) - (L^{j+1} + D^{j+1})^{-1}\hat{f}(X^{j+1}, X^j).
        \end{multline}

        Assuming that $\hat{f}$ is Lipschitz continuous with a Lipschitz factor of $l_1$ and $C(X)$ is strictly diagonally dominant, the difference between the $k+1$ and $j+1$ iteration is bounded by:
        \begin{multline}
            \| \dot{X}^{k+1} - \dot{X}^{j+1} \| \leq \|(L^{k+1}+D^{k+1})^{-1}U^{k+1}\dot{X}^k - (L^{j+1}+D^{j+1})^{-1}U^{j+1}\dot{X}^j\| \\ +  l_1K\|X^{k+1} - X^{j+1}\| + l_1 K \|X^k = X^j\|,
        \end{multline}
        where  $K\leq \infty$ such that $\|(L^{k+1} + D^{k+1})^{-1}\| \leq K$.
        
        By \eqref{eq:diagonal_dominance_matrix_ineq}, we can also select a $\gamma$, independent of iteration $k$, such that:
        \begin{equation}
            \|(L^k + D^k)^{-1}U^k\| <  \gamma < 1.
        \end{equation}

        This implies that 
        \begin{equation}
            \| \dot{X}^{k+1} - \dot{X}^{j+1} \| \leq \gamma \|\dot{X}^k - \dot{X}^j\| +  l_1K\|X^{k+1} - X^{j+1}\| + l_1 K \|X^k-X^j\|.
        \end{equation} 
        
        From Lemma \ref{lm:norm_ineq}, we can conclude that 
        \begin{equation}
            \| \dot{X}^{k+1} - \dot{X}^{j+1} \| \leq \gamma \|\dot{X}^k - \dot{X}^j\| +  l_1K\|X^{k+1}(0) - X^{j+1}(0)\| + l_1 K \|X^k(0)-X^j(0)\|.
        \end{equation}
        which is reduced to the following inequality:

        \begin{equation}
            \| \dot{X}^{k+1} - \dot{X}^{j+1} \| \leq \gamma \|\dot{X}^k - \dot{X}^j\|,
        \end{equation} 
        since both sequences $k$ and $j$ are initialized at fixed initial conditions $X(0)=X_0$.
        
        Through the contraction mapping in Lemma \ref{lemma:contraction}, this implies that there exists a fixed point solution for which:
        \begin{equation}
            \dot{X}^{k+1} = \dot{X}^k,
        \end{equation}
        thus proving convergence of G-S.
\end{proof}

\section{Convergence Rates}

In this section, we use a Gauss-Seidel analysis to study the convergence rate of centralized gradient descent (CGD), DANE and ADMM for optimizing the following strongly convex, quadratic, separable objective function:
\begin{align}
    \sum_{i=1}^{m} f_i(\x) \\
    f_i(\x) = \frac{1}{2} A_i\|\x\|^2 + B_i \x + C,
\end{align}
where $A,B\in \Re^{m,m}$ and $C\in \Re$. The gradient of each sub-problem is:
\begin{align}
    \nabla f_i(\x) = A_i\x + B_i.
\end{align}

\subsection{Convergence Rate of Centralized Gradient Descent}
The update using CGD with a fixed-step size of $\alpha$ is:
\begin{align}
    \x_i^{k+1} = \x_c^{k} - \alpha (A_i\x_c + B_i) \;\;\forall i\in[1,n]\\
    \x_c^{k+1} = \frac{1}{n}\sum_{i=1}^{n} \x_i.
\end{align}
Each iteration of CGD in a matrix form is:
\begin{align}
    &\begin{bmatrix}
        1 & 0 & \ldots & 0 \\
        0 & 1 & \ldots & 0 \\
        0 & 0 & \ddots & 0 \\
        -1/n & -1/n & \ldots & 1
    \end{bmatrix} \begin{bmatrix}
        \x_1^{k+1} \\ \x_2^{k+1} \\ \vdots \\ \x_c^{k+1}
    \end{bmatrix} = \begin{bmatrix}
        0 & 0 & \ldots & 1-\alpha A_1 \\
        0 & 0 & \ldots & 1-\alpha A_2 \\
        0 & 0 & \ddots & 1-\alpha A_i \\
        0 & 0 & \ldots & 0
    \end{bmatrix} \begin{bmatrix}
        \x_1^{k+1} \\ \x_2^{k+1} \\ \vdots \\ \x_c^{k+1}
    \end{bmatrix} + \begin{bmatrix}
        B_1 \\ B_2 \\ \vdots \\ 0
    \end{bmatrix} \\
     \implies &\begin{bmatrix}
        \x_1^{k+1} \\ \x_2^{k+1} \\ \vdots \\ \x_c^{k+1}
    \end{bmatrix} = \begin{bmatrix}
        1 & 0 & \ldots & 0 \\
        0 & 1 & \ldots & 0 \\
        0 & 0 & \ddots & 0 \\
        -1/n & -1/n & \ldots & 1
    \end{bmatrix}^{-1} \left(\begin{bmatrix}
        0 & 0 & \ldots & 1-\alpha A_1 \\
        0 & 0 & \ldots & 1-\alpha A_2 \\
        0 & 0 & \ddots & 1-\alpha A_i \\
        0 & 0 & \ldots & 0
    \end{bmatrix} \begin{bmatrix}
        \x_1^{k+1} \\ \x_2^{k+1} \\ \vdots \\ \x_c^{k+1}
    \end{bmatrix} + \begin{bmatrix}
        B_1 \\ B_2 \\ \vdots \\ 0
    \end{bmatrix}\right) \\
    \implies & \begin{bmatrix}
        \x_1^{k+1} \\ \x_2^{k+1} \\ \vdots \\ \x_c^{k+1}
    \end{bmatrix} = \begin{bmatrix}
        0 & 0 & \ldots & 1-\alpha A_1 \\
        0 & 0 & \ldots & 1-\alpha A_2 \\
        0 & 0 & \ddots & 1-\alpha A_i \\
        0 & 0 & \ldots & 1 -\frac{\alpha}{n}\sum_{i=1}^n A_i
    \end{bmatrix} \begin{bmatrix}
        \x_1^{k+1} \\ \x_2^{k+1} \\ \vdots \\ \x_c^{k+1}
    \end{bmatrix} + \begin{bmatrix}
        1 & 0 & \ldots & 0 \\
        0 & 1 & \ldots & 0 \\
        0 & 0 & \ddots & 0 \\
        1/n & 1/n & \ldots & 1
    \end{bmatrix} \begin{bmatrix}
        B_1 \\ B_2 \\ \vdots \\ 0
    \end{bmatrix}
\end{align}

The iterative matrix for CGC is defined as:
\begin{equation}
    G_{cgc} = \begin{bmatrix}
        0 & 0 & \ldots & 1-\alpha A_1 \\
        0 & 0 & \ldots & 1-\alpha A_2 \\
        0 & 0 & \ddots & 1-\alpha A_i \\
        0 & 0 & \ldots & 1 -\frac{\alpha}{n}\sum_{i=1}^n A_i
    \end{bmatrix}.
\end{equation}
The eigenvalues of $G_
{cgc}$ are: 
\begin{equation}
    \lambda_{cgc} = [0, 0, \ldots, 0, eig(1-\frac{\alpha}{n}\sum_{i}^{n} A_i)],
\end{equation}
and the spectral radius, $\rho(G_{cgc})$, is:
\begin{equation}
    \rho(G_{cgc}) = max(eig(1-\frac{\alpha}{n}\sum_{i=1}^{n} A_i)).
\end{equation}
Therefore, from a G-S approach \cite{white2012relaxation}, the convergence rate of CGC is bounded by:
\begin{equation}
    O(e^{\rho(G_{one})k}) = O(e^{max(eig(1-\frac{\alpha}{n}\sum_{i=1}^{n}A_i))k}).
\end{equation}

\subsection{Convergence Rate of DANE}
DANE uses a Hessian approximation to define a consensus update. With a fixed step-size of $\alpha$, the DANE update step is:
\begin{align}
    \x_i^{k+1} &= \x_c^{k} - \alpha (A_i + \mu I)^{-1}(A_i \x_c^k + B_i) \\
    & = (I - \alpha (A_i + \mu I)^{-1}A_i)\x_c^k - \alpha(A_i+\mu I)^{-1}B_i \\
    \x_c^{k+1} &= \frac{1}{n}\sum_{i=1}^{n} \x_i^{k+1}.
\end{align}

The update in a matrix form is:
\begin{align}
    \begin{bmatrix}
        1 & 0 & \ldots & 0 \\
        0 & 1 & \ldots & 0 \\
        0 & 0 & \ddots & 0 \\
        -1/n & -1/n & \ldots & 1
    \end{bmatrix} \begin{bmatrix}
        \x_1 ^{k+1} \\
        \x_2^{k+1} \\
        \vdots \\
        \x_c^{k+1}
    \end{bmatrix} = \begin{bmatrix}
        0 & 0 & \ldots & (I - \alpha (A_1 + \mu I)^{-1}A_1)\\
        0 & 0 & \ldots & (I - \alpha (A_2 + \mu I)^{-1}A_2) \\
        0 & 0 & \ddots & (I - \alpha (A_n + \mu I)^{-1}A_n) \\
        0 & 0 & \ldots &0
    \end{bmatrix} \begin{bmatrix}
        \x_1^k \\
        \x_2^k \\
        \vdots \\
        \x_c^k
    \end{bmatrix}
    + \begin{bmatrix}
        (I - \alpha (A_1 + \mu I)^{-1})B_1 \\
        (I - \alpha (A_2 + \mu I)^{-1})B_2 \\
        \vdots \\
        0
    \end{bmatrix} \\
    \implies  \begin{bmatrix}
        \x_1 ^{k+1} \\
        \x_2^{k+1} \\
        \vdots \\
        \x_c^{k+1}
    \end{bmatrix} = \begin{bmatrix}
        1 & 0 & \ldots & 0 \\
        0 & 1 & \ldots & 0 \\
        0 & 0 & \ddots & 0 \\
        -1/n & -1/n & \ldots & 1
    \end{bmatrix}^{-1} \left ( \begin{bmatrix}
        0 & 0 & \ldots & (I - \alpha (A_1 + \mu I)^{-1}A_1)\\
        0 & 0 & \ldots & (I - \alpha (A_2 + \mu I)^{-1}A_2) \\
        0 & 0 & \ddots & (I - \alpha (A_n + \mu I)^{-1}A_n) \\
        0 & 0 & \ldots &0
    \end{bmatrix} \begin{bmatrix}
        \x_1^k \\
        \x_2^k \\
        \vdots \\
        \x_c^k
    \end{bmatrix}
    + \begin{bmatrix}
        (I - \alpha (A_1 + \mu I)^{-1})B_1 \\
        (I - \alpha (A_2 + \mu I)^{-1})B_2 \\
        \vdots \\
        0
    \end{bmatrix}\right ) \\
    \implies = \begin{bmatrix}
        0 & 0 & \ldots & (I - \alpha (A_1 + \mu I)^{-1}A_1)\\
        0 & 0 & \ldots & (I - \alpha (A_2 + \mu I)^{-1}A_2) \\
        0 & 0 & \ddots & (I - \alpha (A_n + \mu I)^{-1}A_n) \\
        0 & 0 & \ldots & 1-\frac{\alpha}{n} \sum_{i=1}^{n} (A_i+\mu I)^{-1}A_i
    \end{bmatrix} \begin{bmatrix}
        \x_1^k \\
        \x_2^k \\
        \vdots \\
        \x_c^k
    \end{bmatrix} + \begin{bmatrix}
        1 & 0 & \ldots & 0 \\
        0 & 1 & \ldots & 0 \\
        0 & 0 & \ddots & 0 \\
        -1/n & -1/n & \ldots & 1
    \end{bmatrix}^{-1} \begin{bmatrix}
        (I - \alpha (A_1 + \mu I)^{-1})B_1 \\
        (I - \alpha (A_2 + \mu I)^{-1})B_2 \\
        \vdots \\
        0
    \end{bmatrix}.
\end{align}

The iterative matrix for DANE is defined as:
\begin{equation}
    G_{DANE} = \begin{bmatrix}
        0 & 0 & \ldots & (I - \alpha (A_1 + \mu I)^{-1}A_1)\\
        0 & 0 & \ldots & (I - \alpha (A_2 + \mu I)^{-1}A_2) \\
        0 & 0 & \ddots & (I - \alpha (A_n + \mu I)^{-1}A_n) \\
        0 & 0 & \ldots & 1-\frac{\alpha}{n} \sum_{i=1}^{n} (A_i+\mu I)^{-1}A_i
    \end{bmatrix}.
\end{equation}

and the spectral radius of $G_{DANE}$ is:
\begin{equation}
    \rho(G_{DANE}) = \max eig(1-\frac{\alpha}{n} \sum_{i=1}^{n} (A_i+\mu I)^{-1}A_i)
\end{equation}

The convergence rate is therefore:
\begin{equation}
        O(e^{\rho(G_{DANE})k}) = O(e^{\max eig(1-\frac{\alpha}{n} \sum_{i=1}^{n} (A_i+\mu I)^{-1}A_i) k}).
\end{equation}

\subsection{Convergence Rate of ADMM}

ADMM introduces a set of dual-variables, $\lambda_i$ which are updated at each iteration. The ADMM update step with a step-size of $\alpha$ and constant $\eta$:
\begin{align}
    \x_i^{k+1} = \x_i^{k} - \alpha (A_i \x_i^k + B_i + \eta(\x_i^k - \x_c^k + \frac{1}{\eta}\lambda_i^k) \\
    \x_c^{k+1} = \frac{1}{n} \sum_{i=1}^{n} \x_i^{k+1} + \frac{1}{\eta} \lambda_i^k \\
    \lambda_i^{k+1} = \frac{1}{n}\lambda_i^k + \x_i^{k+1} - \x_c^{k+1}.
\end{align}

In matrix form, the update is:
\begin{multline}
        \begin{bmatrix}
        1 & 0 & \ldots & 0 & 0 & 0 &\ldots\\
        0 & 1 & \ldots & 0 & 0 & 0 & \ldots \\
        0 & 0 & \ddots & 0 & 0 & 0 & \ldots \\
        -1/n& -1/n & \ldots 1 & 0 & 0 &\ldots \\
        -1 & 0 & \ldots & 1 & 1/\eta & 0 & \ldots \\
        0 & -1 & \ldots & 1 & 0 & 1/\eta & \ldots\\
        0 & 0 & \ddots & 1 & 0 & 0 & \ddots
    \end{bmatrix} \begin{bmatrix}
        \x_1^{k+1}\\
        \x_2^{k+1}\\
        \vdots \\
        \x_c^{k+1} \\
        \lambda_1^{k+1} \\
        \lambda_2^{k+1} \\
        \vdots
    \end{bmatrix} = \\ \begin{bmatrix}
        1-\alpha (A_1+\eta) & 0 & \ldots & -\alpha \eta & -\alpha & 0 & \ldots \\
        0 & 1-\alpha(A_2+\eta) & \ldots & -\alpha\eta & 0 &-\alpha & \ldots \\
        0 & 0 & \ddots & -\alpha\eta & 0 & 0& \ddots \\
        0 & 0 & \ldots &0 & 1/n\eta & 1/n\eta & \ldots \\
        0 & 0 & \ldots & 0 & 1/\eta & 0 & \ldots \\
        0 & 0 & \ldots & 0 & 0 &1\eta &\ldots \\
        0 & 0 & \ldots & 0 & 0 & 0 &\ddots
    \end{bmatrix}\begin{bmatrix}
        \x_1^k\\
        \x_2^k\\
        \vdots \\
        \x_c^k \\
        \lambda_1^k\\
        \lambda_2^k \\
        \vdots
    \end{bmatrix} + \begin{bmatrix}
        -\alpha B_1 \\
        -\alpha B_2 \\
        \vdots \\
        0 \\
        0 \\
        0 \\
        \vdots
    \end{bmatrix} \label{eq:admm_update_matrix}
\end{multline}

We define matrices $G_1$ and $G_2$ as the matrices on the left and right hand side of \eqref{eq:admm_update_matrix}:
\begin{equation}
    G_1 \equiv \begin{bmatrix}
        1 & 0 & \ldots & 0 & 0 & 0 &\ldots\\
        0 & 1 & \ldots & 0 & 0 & 0 & \ldots \\
        0 & 0 & \ddots & 0 & 0 & 0 & \ldots \\
        -1/n& -1/n & \ldots 1 & 0 & 0 &\ldots \\
        -1 & 0 & \ldots & 1 & 1/\eta & 0 & \ldots \\
        0 & -1 & \ldots & 1 & 0 & 1/\eta & \ldots\\
        0 & 0 & \ddots & 1 & 0 & 0 & \ddots
    \end{bmatrix} \;\;, G_2 \equiv \begin{bmatrix}
        1-\alpha (A_1+\eta) & 0 & \ldots & -\alpha \eta & -\alpha & 0 & \ldots \\
        0 & 1-\alpha(A_2+\eta) & \ldots & -\alpha\eta & 0 &-\alpha & \ldots \\
        0 & 0 & \ddots & -\alpha\eta & 0 & 0& \ddots \\
        0 & 0 & \ldots &0 & 1/n\eta & 1/n\eta & \ldots \\
        0 & 0 & \ldots & 0 & 1/\eta & 0 & \ldots \\
        0 & 0 & \ldots & 0 & 0 &1\eta &\ldots \\
        0 & 0 & \ldots & 0 & 0 & 0 &\ddots
    \end{bmatrix}
\end{equation}

where $G_1$ is a block-lower-diagonal-triangular matrix with eigenvalues:
\begin{equation}
    \lambda_{G_1} = [1,1/\eta],
\end{equation}
and $G_2$ is a block-upper-diagonal-triangular matrix with eigenvalues:
\begin{equation}
    \lambda_{G_2} = [eig(1-\alpha(A_1+\eta)), eig(1-\alpha(A_2+\eta)), \ldots, 1/\eta].
\end{equation}

For values of $\eta>1$, the spectral radius of $G_1$ is:
\begin{equation}
    \rho(G_1) = 1,
\end{equation}
and the spectral radius of $G_2$ is:
\begin{equation}
    \rho(G_2) = \max eig(1-\alpha(A_i+\eta I))
\end{equation}

The iterative matrix for ADMM, $G_{ADMM}$ is defined as:
\begin{equation}
    G_{ADMM} = G_1^{-1}G_2,
\end{equation}
which has spectral radius bounded by:
\begin{align}
    \rho(G_{ADMM}) &\leq \rho(G_1^{-1})\rho(G_2) \\
    & \leq \max eig(1-\alpha(A_i+\eta I)).
\end{align}

Therefore the convergence rate of ADMM is upper bounded by:
\begin{equation}
    O(e^{\rho(G_{ADMM})k}) \leq O(e^{\max eig(1-\alpha(A_i+\eta I))k}).
\end{equation}

\printbibliography

\end{document}